%%%%%%%%%%%%%%%%%%%%%%%%%%%%%%%%%%%%%%%%%%%%%%%%%%%%%%%%%%%%%%%%%%%%%%%%%%%%%%%%%%%%%
%  
%%%%%%%%%%%%%%%%%%%%%%%%%%%%%%%%%%%%%%%%%%%%%%%%%%%%%%%%%%%%%%%%%%%%%%%%%%%%%%%%%%%%%

\documentclass[10pt]{amsart}

\usepackage{amssymb,bm}
\usepackage[all]{xy}

\usepackage{etoolbox}

% minimal bbm package
\DeclareMathAlphabet{\mathbbm}{U}{bbm}{m}{n}

% italic (slanted) mathbb
\DeclareMathAlphabet{\bbi}{U}{bbm}{m}{sl}

% hyperlinks
\usepackage[
	pdfborder={0 0 0},
	colorlinks,
	plainpages,
	backref,
	citecolor=red!50!black,
	filecolor=Darkgreen,
	linkcolor=blue!40!black,
	urlcolor=cyan!50!black!90
]{hyperref}

\usepackage{graphicx,accents}

\makeatletter
\DeclareRobustCommand{\cev}[1]{%
  \mathpalette\do@cev{#1}%
}
\newcommand{\do@cev}[2]{%
  \fix@cev{#1}{+}%
  \reflectbox{$\m@th#1\vec{\reflectbox{$\fix@cev{#1}{-}\m@th#1#2\fix@cev{#1}{+}$}}$}%
  \fix@cev{#1}{-}%
}
\newcommand{\fix@cev}[2]{%
  \ifx#1\displaystyle
    \mkern#23mu
  \else
    \ifx#1\textstyle
      \mkern#23mu
    \else
      \ifx#1\scriptstyle
        \mkern#22mu
      \else
        \mkern#22mu
      \fi
    \fi
  \fi
}

\makeatletter
\DeclareRobustCommand{\ccong}{\mathrel{\mathpalette\@verequiv\sim}}
\newcommand{\@verequiv}[2]{%
  \lower.5\p@\vbox{
    \lineskiplimit\maxdimen
    \lineskip-.5\p@
    \ialign{%
      $\m@th#1\hfil##\hfil$\crcr
      #2\crcr
      \equiv\crcr
    }%
  }%
}
\makeatother

\setlength{\parskip}{0ex}
\setlength{\oddsidemargin}{0in}
\setlength{\evensidemargin}{0in}
\setlength{\textwidth}{6.5in}
\setlength{\topmargin}{-0.5in}
\setlength{\textheight}{9in}

\numberwithin{equation}{section}

% shortcuts
\newtheorem{thrm}{Theorem}[section]

\newtheorem{lemma}[thrm]{Lemma}

\newtheorem{defn}[thrm]{Definition}

\theoremstyle{remark}
\newtheorem{rmk}[thrm]{Remark}
\newtheorem{exam}[thrm]{Example}

\newcommand{\nc}{\newcommand}

\nc{\ta}{\tilde{a}}

%%%%%%%%%%%%%%%%%%%

\nc{\al}{\alpha}
\nc{\be}{\beta}
\nc{\eps}{\epsilon}
\nc{\veps}{\varepsilon}
\nc{\ga}{\gamma}
\nc{\Ga}{\Gamma}
\nc{\ka}{\kappa}
\nc{\vka}{\varkappa}
\nc{\la}{\lambda}
\nc{\La}{\Lambda}
\nc{\del}{\delta}
\nc{\om}{\omega}
\nc{\si}{\sigma}
\nc{\vsi}{\varsigma}
\nc{\Ups}{\upsilon}
\nc{\vphi}{\varphi}

\nc{\ud}{\underline}
\nc{\tl}{\tilde}

\nc{\at}{\tilde{a}}
\nc{\tu}{\tilde{u}}
\nc{\tv}{\tilde{v}}
\nc{\tw}{\tilde{w}}

\nc{\mfg}{\mathfrak{g}}
\nc{\mfh}{\mathfrak{h}}
\nc{\mfso}{\mathfrak{so}}
\nc{\mfsp}{\mathfrak{sp}}
\nc{\mfgl}{\mathfrak{gl}}

\nc{\mfL}{\mathfrak{L}}
\nc{\mcL}{\mathcal{L}}

\nc{\End}{\mathrm{End}}
\nc{\Ext}{\mathrm{Ext}}
\nc{\Hom}{\mathrm{Hom}}
\nc{\Ima}{\mathrm{Image}}
\nc{\Ind}{\mathrm{Ind}}
\nc{\Ker}{\mathrm{Ker}}
\nc{\RHom}{\mathrm{RHom}}
\nc{\Sym}{\mathrm{Sym}}

\nc{\ddeg}{\mathtt{deg}}
\nc{\dimm}{\mathtt{dim}}
\nc{\lmod}{\mathtt{lmod}}
\nc{\opp}{\mathtt{opp}}
\nc{\rmod}{\mathtt{rmod}}

\nc{\mmod}{\mathrm{mod}}
\nc{\nbh}{\mathrm{nbh}}

\nc{\mf}{\mathfrak}
\nc{\mc}{\mathcal}
\nc{\ms}{\mathsf}
\nc{\bb}{\mathbb}
\nc{\mr}{\mathscr}

\nc{\wh}{\widehat}
\nc{\wt}{\widetilde}

\nc{\R}{\mathbb{R}}
\nc{\Q}{\mathbb{Q}}
\nc{\C}{\mathbb{C}}
\nc{\N}{\mathbb{N}}
\nc{\Z}{\mathbb{Z}}

\nc{\ot}{\otimes}
\nc{\op}{\oplus}
\nc{\ol}{\overline}

\DeclareMathOperator{\sign}{sign}
\DeclareMathOperator{\spanc}{span_\C}

%%%%%%%%%%%%%%%%%%%%

\nc{\bdot}{\boldsymbol\cdot}

\nc{\lp}[1]{\ell^+_{#1}}
\nc{\lm}[1]{\ell^-_{#1}}
\nc{\lpm}[1]{\ell^\pm_{#1}}
\nc{\lmp}[1]{\ell^\mp_{#1}}

%%%%%%%%%%%%%%%%%%%%

\nc{\Res}[1]{\underset{\;{#1}\;}{\rm Res}}

%%% new math environment short-cuts

\nc{\equ}[1]{\begin{equation}#1\end{equation}}
\nc{\eqa}[1]{\begin{equation}\begin{alignedat}{50}#1\end{alignedat}\end{equation}}
\nc{\eqn}[1]{\begin{equation*}\begin{alignedat}{50}#1\end{alignedat}\end{equation*}}
\nc{\eqg}[1]{\begin{equation}\begin{gathered}#1\end{gathered}\end{equation}}

\nc{\ali}[1]{\begin{alignat}{50}#1\end{alignat}}
\nc{\als}[1]{\begin{subequations}\begin{alignat}{50}#1\end{alignat}\end{subequations}}
\nc{\aln}[1]{\begin{alignat*}{50}#1\end{alignat*}}

\nc{\gat}[1]{\begin{gather}#1\end{gather}}
\nc{\gas}[1]{\begin{subequations}\begin{gather}#1\end{gather}\end{subequations}}
\nc{\gan}[1]{\begin{gather*}#1\end{gather*}}

\usepackage{color}  
\usepackage[usenames,dvipsnames]{xcolor}
\nc{\red}{\color{red}}
\nc{\blu}{\color{blue}}
\nc{\gre}{\color{green}}
\nc{\orn}{\color{orange}}
\nc{\brn}{\color{Brown}}
\nc\el{\nonumber\\}
\nc\nn{\nonumber}

\DeclareMathOperator{\tr}{tr}

\nc{\qu}{\quad}
\nc{\qq}{\qquad}

\renewcommand{\,}{\kern 0.1em} % standard is \kern 0.16667em

\nc{\bi}{\bar\imath}
\nc{\bj}{\bar\jmath}
\nc{\bk}{\bar k}
\nc{\bl}{\bar l}

\usepackage[scr=boondoxo]{mathalfa}
\nc{\key}{{\mathscr{k}}}
\nc{\ley}{{\mathscr{l}}}
\nc{\hey}{{\mathscr{h}}}
\nc{\emm}{{\mathscr{m}}}

%%%%%%%%%%%%%%%%%%%%%%%%%%%%%%%%%%%%%%%%%%%%%%%%%%%%%%%%%%%%%%%%%%%%
% Body
%%%%%%%%%%%%%%%%%%%%%%%%%%%%%%%%%%%%%%%%%%%%%%%%%%%%%%%%%%%%%%%%%%%%

\begin{document}

\title[Nested algebraic Bethe ansatz for deformed orthogonal and symplectic spin chains]{Nested algebraic Bethe ansatz for deformed \\ orthogonal and symplectic spin chains}

\begin{abstract} 
We construct exact eigenvectors and eigenvalues for $U_q(\mfsp_{2n})$- and $U_q(\mfso_{2n})$-symmetric closed spin chains by means of a nested algebraic Bethe ansatz method. We use a fusion procedure to construct higher-dimensional Lax operators. Our approach generalises and extends the results obtained by Reshetikhin and De Vega--Karowski. We also present a generalisation of Tarasov--Varchenko trace formula for nested Bethe vectors.
\end{abstract}

\author{Allan Gerrard}
\address{University of York, Department of Mathematics, York, YO10 5DD, UK.}
\email{ajg569@york.ac.uk}

\author{Vidas Regelskis}
\address{University of Hertfordshire, School of Physics, Astronomy and Mathematics, Hatfield AL10 9AB,
UK, and
Vilnius University, Instiute of Theoretical Physics and Astronomy, Saul\.etekio av.~3, Vilnius 10257, Lithuania.}
\email{vidas.regelskis@gmail.com}

\maketitle

\setlength{\parskip}{1ex}

%%%%%%%%%%%%%%%%%%%%%%%%%%%%%%%%%%%%%%%%%%%%%%%%%%%%%%%%%%%%%%%%%%
% 
%%%%%%%%%%%%%%%%%%%%%%%%%%%%%%%%%%%%%%%%%%%%%%%%%%%%%%%%%%%%%%%%%%

\section{Introduction}

The nested algebraic Bethe ansatz is a large collection of methods used to find eigenvectors and eigenvalues of transfer matrices of lattice integrable models associated with higher rank Lie algebras. It effectively reduces the problem of diagonalizing quantum Hamiltonians to a set of algebraic equations, known as the Bethe ansatz equations, that in many cases can be solved using numerical methods, see reviews \cite{PRS18,Sla07}. In addition, the nested Bethe vectors play an important role not only in the theory of quantum integrable models but also in representation theory of quantum groups more generally. For instance, Bethe vectors can be used to construct Jackson integral representations for solutions of the quantized Knizhnik--Zamolodchikov equations \cite{Res92,TaVa94}. 

The nested algebraic Bethe ansatz for orthogonal and symplectic spin chains in the rational setting was studied by De Vega and Karowski \cite{DVK87} and Reshetikhin \cite{Res85,Res91}. The analytic Bethe ansatz in both rational and trigonometric setting has been addressed by Kuniba and Suzuki in \cite{KuSu95}. An extension towards models with rational orthosymplectic symmetries via a different nesting procedure was presented by Martins and Ramos in \cite{MaRa97}. However there has been little progress in orthogonal and symplectic integrable models with periodic boundary conditions since. The $O(N)$ Gross-Neveu model has been recently studied by Babujian, Foerster and Karowski in  \cite{BMF12, BMF16}.

This paper aims to fill gaps in the literature by presenting a nested algebraic Bethe ansatz for $U_q(\mfsp_{2n})$- and $U_q(\mfso_{2n})$-symmetric closed spin chains.
We employ a fusion procedure similar to that in \cite{IMO12,IMO14} to construct Lax operators for skewsymmetric or symmetric representations, with which we build an integrable transfer matrix. 
The notation used in the paper closely follows that in \cite{GRW20}. 
We then proceed to diagonalise the transfer matrix using an approach based on a mixture of methods that appeared in \cite{DVK87,Res85,Res91} and more recently in \cite{GeRe20}.
We show that the problem of constructing transfer matrix eigenvectors may be reduced to an analogous problem for a $U_q(\mfgl_n)$-symmetric spin chain, which we solve inductively, following \cite{BeRa08}.
Our main results are Theorem \ref{T:sosp}, stating sufficient conditions for the Bethe vector to become an eigenvector of the transfer matrix as well as giving its corresponding eigenvalue, and Theorem \ref{T:tf}, stating a closed trace formula for the Bethe vectors, which generalises the formula given in \cite{TaVa13}.

The nesting procedure used in this paper relies on the chains of natural subalgebras of quantum loop algebras $U_q(\mfL\mfsp_{2n}) \supset U_q(\mfL\mfgl_{n}) \supset \cdots \supset U_q(\mfL\mfgl_{2})$ and $U_q(\mfL\mfso_{2n}) \supset U_q(\mfL\mfgl_{n}) \supset \cdots \supset U_q(\mfL\mfgl_{2})$. 
This gives an unambiguous solution to the spectral problem of a $U_q(\mfsp_{2n})$-symmetric spin chain. 
However, for a $U_q(\mfso_{2n})$-symmetric spin chain such a nesting breaks the underlying symmetry of the Dynkin diagram of type $D_n$ and additional steps must be taken to arrive at the correct Bethe equations, see Remark \ref{R:BEs} (ii) and (iii) for details.
This was not observed in \cite{DVK87} where a different choice of the nested vacuum vector was made and ultimately lead to a perturbative approach. This paper has resolved this issue. We believe that the results of this paper will be of interested to a wider community, in particular, in the study of the quantized Knizhnik-Zamolodchikov equations of type $C_n$ and $D_n$ in the spirit of \cite{Res92,TaVa94}.

The paper is organised as follows. In Section 2 we introduce the our notation and discuss relevant properties of $R$-matrices. In Section 3 we introduce the spin chain and discuss its structure, underlying symmetries and representations. Section 4 is the main section of this paper and contains our main results, Bethe vectors and Bethe equations. Appendix A gives a brief exposition of the quasi-classical limit.

{\it Acknowledgements.} The authors thank Niall MacKay for useful comments and Eric Ragoucy for suggesting to address this problem. 
A.\ G.\ thanks the Department of Mathematics, University of York for a Doctoral Prize Fellowship under its Mathematics Excellence Programme. 
V.\ R.\ thanks Department of Mathematics, University of York, for hospitality and support during the completion of this work.
This project has received funding from European Social Fund (project No 09.3.3-LMT-K-712-02-0017) under grant agreement with the Research Council of Lithuania (LMTLT). 
The authors gratefully acknowledge the financial support.

%%%%%%%%%%%%%%%%%%%%%%%%%%%%%%%%%%%%%%%%%%%%%%%%%%%%%%%%%%%%%%%%%%
% Section 2
%%%%%%%%%%%%%%%%%%%%%%%%%%%%%%%%%%%%%%%%%%%%%%%%%%%%%%%%%%%%%%%%%%

\section{Preliminaries and definitions}

%%%%%%%%%%%%%%%%%%%%%%%%%%%%%%%%%%%%%%%%%%%%%%%%%%%%%%%%%%%%%%%%%%

\subsection{Lie algebras and matrix operators}

Let $n\in\N$. We will denote by $\mfg_{2n}$ either the orthogonal Lie algebra $\mfso_{2n}$ or the symplectic Lie algebra $\mfsp_{2n}$. The Lie algebra $\mfg_{2n}$ can be realized as a Lie subalgebra of $\mfgl_{2n}$ as follows. For each $1\le i,j\le 2n$, let $E_{ij}$ denote the standard generators of $\mfgl_{2n}$ and put $\bi=2n-i+1$. Introduce elements $F_{ij} = E_{ij} - \theta_{ij} E_{\bj \bi}$ satisfying
\gat{ 
[F_{ij} , F_{kl} ] = \del_{jk} F_{il} - \del_{il} F_{kj} + \theta_{ij} ( \del_{j \bl} F_{k \bi} - \del_{i\bk} F_{\bj l} ) , \qq F_{ij} + \theta_{ij} F_{\bj\bi} = 0 , \label{[F,F]}
}
where $\theta_{ij}=\theta_i\theta_j$ with $\theta_i=-1$ if $1\le i\le n$ and $\theta_i=1$ if $n< i\le 2n$ in the symplectic case or $\theta_i=1$ in the orthogonal case. The algebra $\mfg_{2n}$ is isomorphic to $\spanc\{F_{ij} : 1\le i,j\le 2n \}$, and $\mathrm{span}_{\C} \{ F_{ii} \, : \, 1 \le i \le n \}$ forms a Cartan subalgebra which will be denoted by $\mfh_{2n}$. The elements $F_{ij}$ with $1\le i,j \le n$ form a subalgebra $\mfgl_n \subset \mfg_{2n}$; the same is true for the elements $F_{ij}$ with $n< i,j \le 2n$.

For convenience, we set $\theta=-1$ for $\mfsp_{2n}$ and $\theta=1$ for $\mfso_{2n}$, and $\theta' = \tfrac12(\theta+1)$. 
We will also need the following tuple of integers:
\equ{
(\nu_{1},\ldots,\nu_{2n}) = (-n + \theta',-n + 1 + \theta',\ldots,-1+\theta',1-\theta',\ldots,n-1-\theta',n-\theta'). \label{nu}
}

Let $M\in\N$. For any $1\le i,j\le M$ denote by $e_{ij}\in\End(\C^M)$ the usual matrix units over $\C$. For a matrix $X$ with entries $x_{ij}$ in an associative algebra $\mc{A}$ we write
\equ{
X_s = \sum_{1\le i,j\le M} I^{\ot s-1} \ot e_{ij} \ot I^{\ot k-s} \ot x_{ij} \in \End(\C^M)^{\ot k} \ot \mc{A} , \label{ten-op}
}
where $k\in \N_{\ge2}$ will always be clear from the context. 

For any matrix operator $A = \sum_{i,j=1}^n e_{ij} \ot a_{ij}$ with $e_{ij}\in \End(\C^n)$ and $a_{ij} \in \mc{A}$, any associative algebra, and any $1\le k\le n$, 
define a $k$-reduced operator $A^{(k)} := \sum_{i,j=k}^n e^{(k)}_{i-k+1,j-k+1} \ot a_{ij}$ with matrix units $e^{(k)}_{ij}\in\End(\C^{n-k+1})$. 
Given $l\ge1$ it will be convenient to say that $A^{(k+l)}$ is an $l$-reduced operator of $A^{(k)}$. 

%%%%%%%%%%%%%%%%%%%%%%%%%%%%%%%%%%%%%%%%%%%%%%%%%%%%%%%%%%%%%%%%%%

\subsection{Quantum $R$-matrices}

Choose $q\in\R^\times$, $q\ne \pm1$, and set $\ka=n-\theta$. Introduce a matrix-valued rational function $R$ by 
\equ{
R(u,v) := R_q + \frac{q-q^{-1}}{v/u-1}\,P - \frac{q-q^{-1}}{q^{2\ka}v/u-1}\,Q_q ,  \label{Ru}
}
where $R_q$, $P$ and $Q_q$ are matrix operators on $\C^{2n}\ot\C^{2n}$ defined by
\eqg{
R_q := \sum_{1\le i,j\le 2n}q^{\del_{ij}-\del_{i\bj}}e_{ii}\ot e_{jj} + (q-q^{-1}) \sum_{1\le i<j\le 2n} (e_{ij}\ot e_{ji} -  q^{\nu_i-\nu_j}\theta_{ij}e_{ij}\ot e_{\bi\bj}) ,
\\
P := \sum_{1\le i,j\le n} e_{ij} \ot e_{ji} , \qq Q_q := \sum_{1\le i,j\le 2n} q^{\nu_i-\nu_j} \theta_{ij} e_{ij} \ot e_{\bi\bj}.  \label{RPQ}
}
The matrix $R(u,v)$, obtained by Jimbo in \cite{Ji86}, is a solution of the quantum Yang-Baxter equation on $(\C^{2n})^{\ot 3}$ with spectral parameters,
\equ{ \label{YBE}
R_{12}(u,v)\,R_{13}(u,w)\,R_{23}(v,w) = R_{23}(v,w) \,R_{13}(u,w)\,R_{12}(u,v) ,
}
where we have employed the notation \eqref{ten-op}.

Let $k\ge l$. (We will not need the case when $k<l$.) With a slight abuse of notation we introduce ``reduced'' matrix operators $R^{(k,l)}_q$, $P^{(k,l)}$ %, $I^{(k,l)}_q$ 
and $Q^{(k,l)}_q$ on $\C^{n-k+1}\ot\C^{n-l+1}$ by
\eqg{ 
R^{(k,l)}_q := \sum_{i=1}^{n-k+1} \sum_{j=1}^{n-l+1} q^{\del_{ij}}e^{(k)}_{ii}\ot e^{(l)}_{jj} + (q-q^{-1}) \sum_{i,j=1}^{n-k+1} \del_{i<j}\,e^{(k)}_{ij}\ot e^{(l)}_{j'i'} , \\
P^{(k,l)} :=\sum_{i,j=1}^{n-k+1} e^{(k)}_{ij} \ot e^{(l)}_{j'i'}, \qq 
Q^{(k,l)}_q := \sum_{i,j=1}^{n-k+1} q^{i-j} e^{(k)}_{ij} \ot e^{(l)}_{\bi\bj} , \label{RPIQ}
}
where $i'=i+k-l$, $j'=j+k-l$ and $\bi = n-k-i+2$, $\bj = n-k-j+2$. The operators \eqref{RPQ} and \eqref{RPIQ} will never appear simultaneously, so there will be no ambiguity. We also note that
\equ{
(R^{(k,l)}_q)^{-1} = R^{(k,l)}_{q^{-1}} , \qq P^{(k,l)} Q^{(k,l)}_q P^{(k,l)} = Q^{(k,l)}_{q^{-1}} . \label{RQ}
}
Here the subscript $q^{-1}$ means that all instances of $q$ are replaced with $q^{-1}$ in the definition of the operator. 
Such notation will be used throughout this manuscript.

Recall that $\C^{2n} \cong \C^2 \ot \C^n$. Let $x_{ij}$ with $1\le i,j\le 2$ denote the matrix units of $\End(\C^2)$.
Then, for any $1\le i,j \le n$, we may write
\equ{
e_{ij} = x_{11} \ot e^{(1)}_{ij}, \qu e_{n+i,j} = x_{21} \ot e^{(1)}_{ij}, \qu e_{i,n+j} = x_{12} \ot e^{(1)}_{ij} ,\qu e_{n+i,n+j} = x_{22} \ot e^{(1)}_{ij} .
\label{e=x*e}
} 
Viewing the matrix $R(u,v)$ as an element in $\End(\C^2\ot \C^2)$ we recover the six-vertex block structure,
\eqa{
R(u,v) = \begin{pmatrix}  R^{(1,1)}(u,v)\! & \\ & \!K^{(1,1)}(u,q^{2\ka}v)\! & \!U^{(1,1)}(u,v)\! \\ & \!\bar{U}^{(1,1)}(v,u)\! & \!K^{(1,1)}(u,q^{2\ka}v)\! \\ &&& \!R^{(1,1)}(u,v) \end{pmatrix} , \label{R(u):new}
}
where the operators inside the matrix above are each acting on $\C^n \ot \C^n$ and are defined in terms of those in \eqref{RPIQ}, that is
\gat{
R^{(k,l)}(u,v) := R^{(k,l)}_q + \frac{q-q^{-1}}{v/u-1} \,P^{(k,l)}, \qq 
K^{(k,l)}(u,v) := (R^{(k,l)}_{q^{-1}}(u,v))^{\bar\om_2}, \label{RUK1} \\ 
U^{(k,k)}(u,v) := -\frac{q-q^{-1}}{u/v-1} \, P^{(k,k)} + \frac{\theta (q-q^{-1})}{q^{-\ka'}u/v-q^{\ka'}}\, Q^{(k,k)}_q , \qq 
\bar{U}^{(k,k)}(v,u) := P^{(k,k)} U_{q^{-1}}^{(k,k)}(v,u) P^{(k,k)} , \label{RUK2}
}
where $\ka'=\ka-k+1$ and $\bar\om$ is the transposition defined by $\bar{\om} : e^{(l)}_{ij} \mapsto q^{j-i} e^{(l)}_{\bj\,\bi}$, the inverse of which is $\om : e^{(l)}_{ij} \mapsto q^{i-j} e^{(l)}_{\bj\,\bi}$, and $\bi = n-i-l+1$, $\bj = n-j-l+1$. 
\if0
{ \red Matrix operators in \eqref{RUK} satisfy the following unitarity identities,
\equ{ 
R^{(k,l)}(u,v)\,R^{(k,l)}_{21}(v,u) = \frac{(q^2 u - v)(q^{-2} u - v)}{ (u-v)^2}\,I^{(k,l)} , \qq
K^{(k,l)}(u,v)\,K^{(k,l)}_{21}(v,q^{2(n-k+1)} u) = I^{(k,l)} , \label{unit}
}
where $R^{(k)}_{21}(v,u) = P^{(k)} R^{(k)}(v,u) P^{(k)}$ and $K^{(k)}_{21}(v,q^{2(n-k+1)}u)$ is defined similarly, and $I^{(k)}$ is the identity matrix on $\C^{n-k+1}\ot\C^{n-k+1}$.}
Matrices $R$ and $K$ in \eqref{RUK1} may be written as
\ali{
(R^{(k,l)}_{12}(u,v))^{t_2} &= C^{(k)}_1 K^{(k,l)}_{12}(u^{-1},v^{-1}) (C^{(k)}_1)^{-1} , \label{t-C-RK1} \\
(K^{(k,l)}_{12}(u,v))^{t_2} &= C^{(k)}_1 R^{(k,l)}_{12}(u^{-1},v^{-1}) (C^{(k)}_1)^{-1} , \label{t-C-RK2}
}
where $C^{(k)} := \sum_{i=1}^{n-k+1} q^{i} e^{(k)}_{i\bar\imath}$. 
\fi
We also note two more important identities,
\ali{
\bar{U}^{(k,k)}(v,u)\,(K^{(k,k)}(u,q^{2\ka}v))^{-1} &= \frac{q-q^{-1}}{v/u-1} P^{(k,k)}  (K^{(k,k)}(u,q^{2\ka}u))^{-1} \el &= \frac{1}{v-u}\Res{w\to u} R^{(k,k)}(u,w) (K^{(k,k)}(u,q^{2\ka}w))^{-1} , \label{UK=PK}
\\[.75em]
(K^{(k,k)}(v,q^{2\ka}u))^{-1} U^{(k,k)}(v,u) &= \frac{q^{-1}-q}{v/u-1} (K^{(k,k)}(u,q^{2\ka}u))^{-1} P^{(k,k)} \el &= \frac{1}{v-u}\Res{w\to u}(K^{(k,k)}(w,q^{2\ka}u))^{-1}  R^{(k,k)}(w,u) , 
\label{KU=KP}
}
that will play a key part in finding the so-called unwanted terms of the algebraic Bethe ansatz. 
Lastly, introduce elements
\gat{
\mc{E}^{+(l)}_{ij} := \del_{ij} q^{-e^{(l)}_{ii}} - (q-q^{-1}) \del_{i<j} e^{(l)}_{ij} , \qq \mc{E}^{-(l)}_{ji} := \del_{ij} q^{e^{(l)}_{ii}} + (q-q^{-1}) \del_{i<j} e^{(l)}_{ji} , \label{E1} \\
\mc{E}^{(l)}_{ij}(u) := \frac{1}{1-u}\,\mc{E}^{+(l)}_{ij} + \frac{1}{1-u^{-1}} \,\mc{E}^{-(l)}_{ij} , \label{E2}
}
where we have used the notation $q^{e^{(l)}_{ii}} = \sum_{j=1}^{n-l+1} q^{\del_{ij}} e^{(l)}_{jj}$. Then we may write
\ali{
R^{(k,l)}(u,v) = \sum_{i,j=1}^{n-k+1} e^{(k)}_{ij} \ot \mc{E}^{(l)}_{j' i'}(v/u) , 
\qq
K^{(k,l)}(u,v) = \sum_{i,j=1}^{n-k+1} e^{(k)}_{ij} \ot (\mc{E}^{(l)}_{q^{-1},j'i'}(v/u))^{\bar\om} , \label{RK-factor}
}
where $i'=i+k-l$ and $j'=j+k-l$.

%%%%%%%%%%%%%%%%%%%%%%%%%%%%%%%%%%%%%%%%%%%%%%%%%%%%%%%%%%%%%%%%%%
% Section 3
%%%%%%%%%%%%%%%%%%%%%%%%%%%%%%%%%%%%%%%%%%%%%%%%%%%%%%%%%%%%%%%%%%

\section{Setting up symmetries and representations of the spin chain}

%%%%%%%%%%%%%%%%%%%%%%%%%%%%%%%%%%%%%%%%%%%%%%%%%%%%%%%%%%%%%%%%%%
%%%%%%%%%%%%%%%%%%%%%%%%%%%%%%%%%%%%%%%%%%%%%%%%%%%%%%%%%%%%%%%%%%

\subsection{Quantum loop algebras $U_q(\mf{L}\mfg_{2n})$ and $U_q(\mf{L}\mfgl_{n})$} \label{sec:Uq}

Let $N=2n$ or $N=n$. Introduce elements $\ell^\pm_{ij}[r]$ with $1\leq i,j\leq N$ and $r\geq0$, combine them into formal series $\ell^\pm_{ij}(u)= \sum_{r\ge0} \ell^\pm_{ij}[r]\, u^{\pm r}$, and collect into generating matrices
\equ{
L^\pm(u) := \sum_{1\le i,j\le N} e_{ij} \ot \ell^\pm_{ij}(u). \label{L(u)}
}
We will say that that elements $\ell^\pm_{ij}[r]$ have degree $r$.

\begin{defn} \label{D:Ugn:aff}
The quantum loop algebra $U_q(\mf{L}\mfg_{2n})$ (resp.\ $U_q(\mf{L}\mfgl_{n})$) is the unital associative algebra with generators $\ell^\pm_{ij}[r]$ with $1\leq i,j\leq N$ and $r\geq0$, subject to the following relations:
\gat{
\ell^-_{ii}[0]\,\ell^+_{ii}[0] = 1 \;\text{ for all $i$ and }\; 
\ell^-_{ij}[0] = \ell^+_{ji}[0] = 0 \;\text{ for $i<j$ and} \\
R_{12}(u,v)\,L^\pm_1(u)\,L^\pm_2(v) = L^\pm_2(v)\,L^\pm_1(u)\,R_{12}(u,v), \label{RLL1}\\
R_{12}(u,v)\,L^+_1(u)\,L^-_2(v) = L^-_2(v)\,L^+_1(u)\,R_{12}(u,v) . \label{RLL2}
}
where $N=2n$ and $R_{12}(u,v)$ is given by \eqref{Ru} (resp.\ $N=n$ and $R_{12}(u,v)$ is given by \eqref{RUK1}).
\end{defn}

The following subalgebras of $U_q(\mfL\mfg_{2n})$ will be relevent to the present work:
\begin{itemize}

\item The subalgebra generated by $\lpm{ij}[0]$ with $1\le i,j\le 2n$, isomorphic to the quantum enveloping algebra $U_q(\mfg_{2n})$ of the direct sum of the Lie algebra $\mfg_{2n}$ and a one dimensional Lie algebra. \smallskip

\item The subalgebra generated by $\lpm{ij}[r]$ with $1\le i,j\le n$ and $r\ge 0$, isomorphic to $U_q(\mfL\mfgl_{n})$. \smallskip

\item The subalgebra generated by $\lpm{ij}[0]$ with $1\le i,j\le n$, isomorphic to the quantum enveloping algebra $U_q(\mfgl_{n})$ of the Lie algebra $\mfgl_{n}$.

\end{itemize}

We write the generating matrices $L^\pm(u)$ of $U_q(\mfL\mfg_{2n})$ as
\equ{
L^\pm(u) = \begin{pmatrix} A^{\pm}(u)\! & \!B^{\pm}(u) \\ C^{\pm}(u)\! & \!D^{\pm}(u) \end{pmatrix} .  \label{Lu:new} 
}
Then viewing $L^\pm_1(u)$ and $L^\pm_2(u)$ as elements in $\End(\C^2\ot \C^2)$ with entries in $\End(\C^n \ot \C^n) \ot U^{ex}_q(\mfL\mfg_{2n})[[u^{\pm1}]]$, we have
\[
L^\pm_1(u) = \begin{pmatrix} A^{\pm}_1(u)\! & & \!B^{\pm}_1(u)\! \\ & \!A^{\pm}_1(u)\! && \!B^{\pm}_1(u) \\ C^{\pm}_1(u)\! & & \!D^{\pm}_1(u) \\ & \!C^{\pm}_1(u)\! && \!D^{\pm}_1(u) \end{pmatrix} \!, \qq
L^\pm_2(u) = \begin{pmatrix} A^{\pm}_2(u)\! & \!B^{\pm}_2(u)\! \\ C^{\pm}_2(u)\! & \!D^{\pm}_2(u)\! \\ && \!A^{\pm}_2(u)\! & \!B^{\pm}_2(u) \\ && \!C^{\pm}_2(u)\! & \!D^{\pm}_2(u) \end{pmatrix} \!.  
\]
This allows us to write the defining relations of $U_q(\mfL\mfg_{2n})$ in terms of the matrix operators $A^{\pm}(u)$, $B^{\pm}(u)$, $C^{\pm}(u)$ and $D^{\pm}(u)$. The relations that we will need are: 
\ali{
A^{\pm}_2(v)\, B^{\pm}_1(u)\, K^{(1,1)}_{12}(u,q^{2\ka}v) &= R^{(1,1)}_{12}(u,v)\, B^{\pm}_1(u) A^{\pm}_2(v) - B^{\pm}_2(v)\, A^{\pm}_1(u) \,\bar{U}^{(1,1)}_{12}(v,u) ,  \label{AB}
\\
K^{(1,1)}_{12}(v,q^{2\ka}u)\,D^{\pm}_1(v)\,B^{\pm}_2(u) &= B^{\pm}_2(u)\, D^{\pm}_1(v)\, R^{(1,1)}_{12}(v,u) - U^{(1,1)}_{12}(v,u)\, B^{\pm}_1(v)\, D^{\pm}_2(u) , \label{DB} \\
K^{(1,1)}_{12}(u,q^{2\ka}v)\, C^\pm_1(u)\, A^\pm_2(v) &= A^\pm_2(v)\, C^\pm_1(u)\, R^{(1,1)}_{12}(u,v) - U^{(1,1)}_{12}(u,v)\, A^\pm_1(u)\, C^\pm_2(v) \label{CA} , \\
C^\pm_2(v)\, D^\pm_2(u)\,K^{(1,1)}_{12}(u,q^{2\ka}v) &= R^{(1,1)}_{12}(u,v)\,D^\pm_1(u)\, C^\pm_2(v) - D^\pm_2(v)\, C^\pm_1(u)\,\bar{U}^{(1,1)}_{12}(v,u) , \label{CD} \\
K^{(1,1)}_{12}(u,q^{2\ka}v)\, D^\pm_1(u)\, A^\pm_2(v) & - A^\pm_2(v)\, D^\pm_1(u)\, K^{(1,1)}_{12}(u,q^{2\ka}v) \el &= B^\pm_2(v)\,C^\pm_1(u)\, \bar{U}^{(1,1)}_{12}(v,u) - U^{(1,1)}_{12}(u,v)\, B^\pm_1(u)\, C^\pm_2(v) , 
\label{AD}
}
and their mixed counterparts obtained in an obvious way, cf.~\eqref{RLL1} vs.~\eqref{RLL2}. Operators $A^{\pm}(u)$, $B^{\pm}(u)$ and $D^{\pm}(u)$ satisfy relations analogous to \eqref{RLL1} and \eqref{RLL2} only with $R^{(1,1)}(u,v)$, e.g.,
\equ{
R^{(1,1)}_{12}(u,v)\,B^{\pm}_1(u)\,B^{\pm}_2(v) = B^{\pm}_2(v)\,B^{\pm}_1(u)\,R^{(1,1)}_{12}(u,v) . \label{BB}
}

We now focus on the subalgebra $U_q(\mfgl_n)\subset U_q(\mfg_{2n})$ generated by coefficients of the matrix entries of $A^{\pm}(u)$. Define a $k$-reduced matrix $A^{\pm(k)}(u) := \sum_{i,j=k}^n e^{(k)}_{i-k+1,j-k+1} \ot [A^{\pm}(u)]_{ij}$ and set
\equ{
\mr{a}^{\pm(k)}(v) := [A^{\pm(k)}(v)]_{11} , \qq 
B^{\pm(k+1)}(u) := \sum_{j=1}^{n-k} (e^{(k+1)}_j)^* \ot [A^{\pm(k)}(u)]_{1,1+j} . \label{def-a-B}
}
We also define a suitably normalised check $\check{R}$-matrix
\[
\check{R}^{(k,l)}_{12}(u,v) := \frac{v-u}{q v-q^{-1} u}\,P^{(k,l)}_{12} R^{(k,l)}_{12}(u,v) .
\]
The defining relations of $U_q(\mfL\mfgl_n)$ then yield
\gat{
\mr{a}^{\pm(k)}(v)\, B^{\pm(k+1)}_1(u) = \frac{q\,v-q^{-1}u}{v-u}\,B^{\pm(k+1)}_1(u) \,\mr{a}^{\pm(k)}(v) + \frac{q-q^{-1}}{u/v-1} \,B^{\pm(k+1)}_1(v) \,\mr{a}^{\pm(k)}(u) , 
 \label{gln:ab} \\
A^{\pm(k)}_1(v)\,B^{\pm(k)}_2(u) = B^{\pm(k)}_2(u)\,A^{\pm(k)}_1(v)\,R^{(k,k)}_{12}(v,u) - \frac{q-q^{-1}}{u/v-1}\,B^{\pm(k)}_2(v)\, A^{\pm(k)}_1(u)\,P^{(k,k)}_{12}, \label{gln:db} \\
B^{\pm(k)}_1(u)\,B^{\pm(k)}_2(v) = B^{\pm(k)}_1(v)\,B^{\pm(k)}_2(u)\,\check{R}^{(k,k)}_{12}(u,v) , \label{gln:bb} \\[.5em]
R^{(k,k)}_{12}(u,v)\,A^{\pm(k)}_1(u)\,A^{\pm(k)}_2(v) = A^{\pm(k)}_2(v)\,A^{\pm(k)}_1(u)\,R^{(k,k)}_{12}(u,v) , \label{gln:dd} 
}
plus the mixed relations.
Note that $R^{(n,n)}_{12}(u,v)$ acts as a constant on $\C\ot \C \cong \C$ multiplying by $\frac{qv-q^{-1}u}{v-u}$, while $\check{R}^{(n,n)}_{12}(u,v)$ multiplies by $1$.

%%%%%%%%%%%%%%%%%%%%%%%%%%%%%%%%%%%%%%%%%%%%%%%%%%%%%%%%%%%%%%%%%%
%%%%%%%%%%%%%%%%%%%%%%%%%%%%%%%%%%%%%%%%%%%%%%%%%%%%%%%%%%%%%%%%%% 

\subsection{Representations} \label{sec:reps}  

Fix $\ell \in \N$, the length of a spin chain, and choose $c_1,\ldots,c_\ell \in \C^\times$, called inhomogeneities or marked points. Then choose $s_1,\ldots,s_\ell \in \N$ such that $1\le s_i\le n$ for all $i$ in the symplectic case, and $s_i\ge 1$ for all $i$ in the orthogonal case. Our goal is to study a spectral problem in the space
\equ{
L := L(\la^{(1)})_{c_1} \ot \cdots \ot L(\la^{(\ell)})_{c_\ell}  \label{L}
}
where $L(\la^{(i)})_{c_i}$ with $i=1,\ldots,\ell$ denote the (skew)symmetric $U_q(\mfL\mfg_{2n})$-modules of lowest weight $\la^{(i)}(u)$ with components given by
\ali{
\la^{(i)}_j(u,c_i) &:= \begin{cases} 
\dfrac{q^{s_i}\,c_i - q^{-s_i} u}{c_i - u}  &\text{if}\qu j=1, \\[.75em]
1 &\text{if}\qu 1< j < 2n, \\[.25em] 
\dfrac{q^{-1}c_i - q^{-2\ka+1}\,u}{q^{s_i-1}c_i - q^{-2\ka-s_i+1}u} & \text{if}\qu j=2n  \end{cases} \label{la-symm}
\intertext{in the symmetric case, i.e.\ when $\mfg_{2n} = \mfso_{2n}$, and by}
\la^{(i)}_j(u,c_i) &:= \begin{cases} 
\dfrac{q\,c_i - q^{-1} u}{c_i - u}  &\text{if}\qu 1\le j \le s_i, \\[.75em]
1 &\text{if}\qu s_i< j < 2n-s_i+1, \\[.25em] 
\dfrac{q^{-s_i} c_i - q^{-2\ka+s_i} u}{q^{-s_i+1}\,c_i - q^{-2\ka+s_i-1}u} & \text{if}\qu 2n-s_i+1\le j\le 2n  \end{cases} \label{la-skew}
}
in the skewsymmetric case, i.e.\ when $\mfg_{2n} = \mfsp_{2n}$. In particular, given a lowest weight vector $\eta \in L$ we have
\equ{
\lpm{ij}(u)\,\eta = 0 \;\text{ for }\;i>j\;\text{ and }\;  
\lpm{jj}(u)\,\eta = \prod_{i=1}^\ell \la^{(i)}(u,c_i) \, \eta \;\text{ for all }j. \label{hw}
}
Here weights $\la^{(i)}(u)$ should be expanded in positive (resp.\ negative) series in $u$ for $\lp{jj}(u)$ (resp.\ for $\lm{jj}(u)$).

As vector spaces, modules $L(\la^{(i)})_{c_i}$ are isomorphic to the subspaces $\Pi^\theta_{s_i} (\C^{2n})^{\ot s_i} \subset (\C^{2n})^{\ot s_i}$ where $\Pi^\pm_{s_i}$ are idempotent operators defined by $\Pi^\pm_1 := 1$ and, when $s_i\ge 2$,
\[
\Pi^{\theta}_{s_i} := \frac{\theta}{[s_i]_q!}\prod_{j=s_i}^2 \Big( R_{12}(q^{-2\theta},1 )P_{12} \cdots R_{j-1,j}(q^{-(j-1)\theta},1)P_{j-1,j} \Big) .
\]
The lowest weight vector of $L(\la^{(i)})_{c_i}$ is $\eta_{s_i} = e_1 \ot \cdots \ot e_1$ in the symmetric case, and it is $\eta_{s_i} = \sum_{\si\in \mf{S}_{s_i}} \sign(\si) \,q^{l(\si)} \cdot e_{\si(1)} \ot e_{\si(2)} \ot \cdots \ot e_{\si(s_i)}$ in the skewsymmetric case;
here $l(\si)$ denotes the length of a reduced expression of $\si\in\mf{S}_{s_i}$, an element in the symmetric group on $s_i$ letters (full details will be given in \cite{GRW20}).
The generating matrix $L^\pm_a(u)$ acts on the space $L$ in terms of a product of $R$-matrices \eqref{Ru},
\equ{
T_a(u; \bm c) := \prod_{i=1}^\ell  \prod_{j=1}^{s_i} R_{ai_j}(u,q^{2\theta(j-1)}c_i)  \label{mono}
}
where $i_j$ enumerate individual tensorands $\C^{2n}$ of $L(\la^{(i)})_{c_i}$. We will often omit the depence on $\bm c$ to ease the notation and write $T_a(u)$, its matrix elements will be denoted as $t_{ij}(u)$.

Crucially, modules $L(\la^{(i)})_{c_i}$ are irreducible $U_q(\mfg_{2n})$-modules of lowest weight $\la^{(i)} = \left( q^{s_i}, 1, \ldots, 1 , q^{-s_i} \right)$ in the symmetric case, and of weight $\la^{(i)} = \left( q, \ldots , q, 1, \ldots, 1 , q^{-1}, \ldots, q^{-1} \right)$ in the skewsymmetric case; here the number of $q$'s and $q^{-1}$'s is $s_i$. The subspace
\[
(L(\la^{(i)})_{c_i})^0 :=  \{ \xi \in L(\la^{(i)})_{c_i} \,:\, t_{n+i,j}[0]\,\xi = 0 \;\text{for}\; 1\le i,j\le n \}
\]
is an irreducible $U_q(\mfgl_{n})$-module of lowest weight $(\la^{(i)}_1, \ldots, \la^{(i)}_n)$. In particular, we have that
\[
L^0 :=  \{ \xi \in L \,:\, t_{n+i,j}(u)\,\xi = 0 \;\text{for}\; 1\le i,j\le n \} = L(\la^{(1)})_{c_1}^0 \ot \cdots \ot L(\la^{(\ell)})_{c_\ell}^0 .
\]
The $A^\pm_a(u)$ and $D^\pm_a(u)$ operators act on the subspace $L^0$ in terms of a product of the ``reduced'' $R$- and $K$-matrices defined in \eqref{RUK1},
\ali{
A^{(1)}_a(u) & := \prod_{i=1}^\ell  \prod_{j=1}^{s_i} R^{(1,1)}_{ai_j}(u,q^{2\theta(j-1)}c_i) , \label{amono} \\
D^{(1)}_a(u) & := \prod_{i=1}^\ell  \prod_{j=1}^{s_i} K^{(1,1)}_{ai_j}(u,q^{2\ka+2\theta(j-1)}c_i) . \label{dmono}
}

%%%%%%%%%%%%%%%%%%%%%%%%%%%%%%%%%%%%%%%%%%%%%%%%%%%%%%%%%%%%%%%%%%
% Section 4
%%%%%%%%%%%%%%%%%%%%%%%%%%%%%%%%%%%%%%%%%%%%%%%%%%%%%%%%%%%%%%%%%%

\section{Algebraic Bethe ansatz}

%%%%%%%%%%%%%%%%%%%%%%%%%%%%%%%%%%%%%%%%%%%%%%%%%%%%%%%%%%%%%%%%%%
%%%%%%%%%%%%%%%%%%%%%%%%%%%%%%%%%%%%%%%%%%%%%%%%%%%%%%%%%%%%%%%%%%

\subsection{Quantum spaces and monodromy matrices} \label{sec:mono}

Choose $m_0,m_1,\dots,m_{n-1}\in\Z_{\ge0}$, which we call excitation or magnon numbers. For each $m_k$ assign an $m_k$-tuple $\bm u^{(k)} := (u^{(k)}_1,\dots,u^{(k)}_{m_k})$ of complex parameters and an $m_k$-tuple of labels $\bm a^{k} := ( a^k_1, \dots, a^k_{m_k} )$. For $m_0$ we additionally assign a tuple $\tl{\bm a}^{0} := ( \ta^0_1, \dots, \ta^0_{m_0} )$. We will often use the following shorthand notation:
\equ{
\bm u^{(0\dots k)} := (\bm u^{(0)},\dots,\bm u^{(k)}), \qq 
\bm a^{\tl 0,0\dots k} := (\tl{\bm a}^{0},\bm a^{0},\dots,\bm a^{k}) . \label{multi}
}
Let $V^{(k)}_{a^{k-1}_i}$ denote a copy of $\C^{n-k+1}$ labelled by ``$a^{k-1}_i$'' and let $W^{(k)}_{\bm a^{k-1}}$ be given by
\[
W^{(k)}_{\bm a^{k-1}} := V^{(k)}_{a^{k-1}_1} \ot \cdots \ot V^{(k)}_{a^{k-1}_{m_{k-1}}} .
\]

Let $L$ be a lowest weight $U_q(\mfL\mfg_{2n})$-module defined in \eqref{L}. We will say that $L^{(0)} := L$ is a level-$0$ quantum space. We define a level-$1$ quantum space by
\equ{
L^{(1)} := (L^{(0)})^0 \ot W^{(1)}_{\tl{\bm a}^{0}} \ot W^{(1)}_{\bm a^{0}} . \label{L1}
}
Then for each $2\le k \le n$ we recursively define a level-$k$ quantum space by
\equ{
L^{(k)} := (L^{(k-1)})^0 \ot W^{(k)}_{\bm a^{k-1}}  \label{Lk}
}
where
\[
(L^{(k-1)})^0 := \{\xi \in L^{(k-1)} \,:\, t_{ij}(u)\,\xi = 0 \text{ for } i>j \text{ and } j < k-1 \}.
\]

\begin{defn} \label{D:mono}
We will say that $T_a(v)$ is a level-0 monodromy matrix. We define level-$1$ monodromy matrices, acting on the space $L^{(1)}$, by 
\ali{
A^{(1)}_{a\bm a^{\tl 0,0}}(v;\bm u^{(0)}) & := \left(\prod_{i=m_0}^{1} K^{(1,1)}_{a \tl a^{0}_i}(v,q^{2\theta} u^{(0)}_i) \right) \left(\prod_{i=1}^{m_0} K^{(1,1)}_{aa^{0}_i}(v,u^{(0)}_i) \right) A^{(1)}_a(v)  , \label{A1} \\
D^{(1)}_{a\bm a^{\tl 0,0}}(v;\bm u^{(0)}) & := D^{(1)}_a(v) \left(\prod_{i=m_0}^{1} R^{(1,1)}_{a\tl a^{0}_i}(v, u^{(0)}_i) \right) \left(\prod_{i=1}^{m_0} R^{(1,1)}_{aa^{0}_i}(v,q^{-2\theta} u^{(0)}_i) \right) . \label{D1}
}
For each $2\le k \le n$ we recursively define level-$k$ monodromy matrices, acting on the spaces $L^{(k)}$, by
\ali{ 
A^{(k)}_{a\bm a^{\tl 0,0\dots k-1}}(v;\bm u^{(0\dots k-1)}) & := A^{(k)}_{a\bm a^{\tl 0,0\dots k-2}}(v;\bm u^{(0\dots k-2)}) \left(\prod_{i=1}^{m_{k-1}} R^{(k,k)}_{aa^{k-1}_i}(v,u^{(k-1)}_i) \right) , \label{Ak} 
\\
D^{(k)}_{a\bm a^{\tl 0,0\dots k-1}}(v;\bm u^{(0\dots k-1)}) & := \left(\prod_{i=1}^{m_{k-1}} R^{(k,k)}_{aa^{k-1}_i}(v,q^{2\ka'}u^{(k-1)}_i) \right)  D^{(k)}_{a\bm a^{\tl 0,0\dots k-2}}(v;\bm u^{(0\dots k-2)}) , \label{Dk}
}
where $A^{(k)}_{a {\bm a}^{\tl 0,0\dots k-2}}$ and $D^{(k)}_{a{\bm a}^{\tl 0,0\dots k-2}}$ 
denote the 1-reduced operators of $A^{(k-1)}_{a{\bm a}^{\tl0,0\dots k-2}}$ and $D^{(k-1)}_{a{\bm a}^{\tl 0,0\dots k-2}}$, respectively.
\end{defn}

Operators $A^{(k)}_{a\bm a^{\tl 0,0\dots k-1}}$ and $D^{(k)}_{a\bm a^{\tl 0,0\dots k-1}}$ 
are matrices with entries in $\End(L^{(k)})$. Thus, to ease the notation, we will write them as $A^{(k)}_{a}$ and $D^{(k)}_{a}$. We will use a similar notation throughout the manuscript.

\begin{lemma} \label{L:a,d-rll}
For $1\le k\le n-1$ let $\equiv$ denote equality of operators in the space $L^{(k)}$. 
Then
\ali{
& R^{(k,k)}_{ab}(v,w)\,A^{(k)}_{a}(v;\bm u^{(0\dots k-1)})\,A^{(k)}_{b}(w;\bm u^{(0\dots k-1)}) \nn\\
& \qq \equiv A^{(k)}_{b}(w;\bm u^{(0\dots k-1)})\,A^{(k)}_{a}(v;\bm u^{(0\dots k-1)})\,R^{(k,k)}_{ab}(v,w) , \nn\\[.5em]
& R^{(k,k)}_{ab}(v,w)\,D^{(k)}_{a}(v;\bm u^{(0\dots k-1)})\,D^{(k)}_{b}(w;\bm u^{(0\dots k-1)}) \nn\\
& \qq \equiv D^{(k)}_{b}(w;\bm u^{(0\dots k-1)})\,D^{(k)}_{a}(v;\bm u^{(0\dots k-1)})\,R^{(k,k)}_{ab}(v,w) , \nn\\[.5em]
& D^{(k)}_{a}(v;\bm u^{(0\dots k-1)})\,R^{(k,k)}_{q^{-1},ab}(v,q^{2\ka'}w)\,A^{(k)}_{b}(w;\bm u^{(0\dots k-1)}) \nn\\
& \qq \equiv A^{(k)}_{b}(w;\bm u^{(0\dots k-1)})\,R^{(k,k)}_{q^{-1},ab}(v,q^{2\ka'}w)\,D^{(k)}_{a}(v;\bm u^{(0\dots k-1)})   \label{DRA=ARD}
}
where $\ka'=\ka-k+1$.
\end{lemma}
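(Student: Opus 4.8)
The plan is to prove the three relations of Lemma \ref{L:a,d-rll} by induction on $k$, using the recursive definitions \eqref{Ak}--\eqref{Dk} of the level-$k$ monodromy matrices together with the Yang--Baxter-type relations \eqref{gln:dd} (and its mixed counterpart) for the reduced operators $A^{\pm(k)}(u)$ and the parallel relations for $D$. The key structural observation is that each level-$k$ monodromy operator $A^{(k)}_a$ (resp.\ $D^{(k)}_a$) is built from the level-$(k-1)$ operator (more precisely its $1$-reduced version, which still satisfies the same $R^{(k,k)}$-relations restricted to the smaller space) by multiplying on the right (resp.\ left) by a string of $R^{(k,k)}$-matrices $\prod_i R^{(k,k)}_{aa^{k-1}_i}(v,\ast)$ involving the auxiliary spaces $W^{(k)}_{\bm a^{k-1}}$. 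So the auxiliary-space factors carry spectral-parameter shifts but are otherwise ``generic'' $R$-matrices, and the whole computation reduces to pushing an $R^{(k,k)}_{ab}(v,w)$ past such strings.

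First I would set up the base case. For $k=1$ the operators $A^{(1)}_a$ and $D^{(1)}_a$ are given by \eqref{A1}--\eqref{D1} as products of $K^{(1,1)}$- and $R^{(1,1)}$-matrices, and acting on $L^{(1)}$ the physical-space part $A^{(1)}_a(v)$, $D^{(1)}_a(v)$ from \eqref{amono}--\eqref{dmono} is itself a product of $R^{(1,1)}$, $K^{(1,1)}$. One then needs: (a) the $RAA$-relation, which follows from the Yang--Baxter equation \eqref{YBE} restricted to the six-vertex block \eqref{R(u):new} — concretely, relation \eqref{gln:dd} at $k=1$ and the Yang--Baxter equation satisfied by $R^{(1,1)}$; (b) the $RDD$-relation, which is the analogous statement for $K^{(1,1)}$; here one uses that $K^{(1,1)}(u,v)=(R^{(1,1)}_{q^{-1}}(u,v))^{\bar\om_2}$ from \eqref{RUK1}, so that $K$ satisfies a Yang--Baxter relation obtained from that for $R^{(1,1)}_{q^{-1}}$ by applying the transposition $\bar\om$ in the appropriate tensor factor; (c) the mixed $DRA$-relation \eqref{DRA=ARD}, which comes from the $RLL$ relation \eqref{RLL2} of mixed type (the $A^+/D^-$ cross relation) projected to the $A,D$ blocks, i.e.\ the mixed analogue of \eqref{AD}, together with the fact that on $L^0$ the off-diagonal blocks $B,C$ act trivially so the cross terms in \eqref{AD} drop out. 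The appearance of $R^{(k,k)}_{q^{-1}}$ and the shift $q^{2\ka'}$ in the $DRA$ relation is exactly what is needed for the $K$-$R$ intertwining; this is where one uses \eqref{RQ}.

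For the inductive step, assuming the three relations hold for $k-1$ on $L^{(k-1)}$, one substitutes \eqref{Ak} into $R^{(k,k)}_{ab}(v,w)A^{(k)}_a(v)A^{(k)}_b(w)$ and uses: the inductive $RAA$-relation for the $1$-reduced operators $A^{(k)}_{a\bm a^{\tl0,0\dots k-2}}$ (valid since they are $1$-reductions of level-$(k-1)$ operators, which by the remark after Definition \ref{D:mono} act on $(L^{(k-1)})^0\subseteq L^{(k)}$), and the Yang--Baxter equation for $R^{(k,k)}$ to commute $R^{(k,k)}_{ab}(v,w)$ through the strings $\prod_i R^{(k,k)}_{aa^{k-1}_i}(v,u^{(k-1)}_i)$ and $\prod_i R^{(k,k)}_{bb\dots}(w,\dots)$ — note $\bm a^{k-1}=\bm b^{k-1}$ as the auxiliary tuple is shared, so the factors for $a$ and $b$ refer to the same spaces $V^{(k)}_{a^{k-1}_i}$ and the standard ``train argument'' (repeated application of \eqref{YBE}) applies. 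The $RDD$-case is identical with left-multiplication; the $DRA$-case requires commuting $R^{(k,k)}_{q^{-1},ab}(v,q^{2\ka'}w)$ — with $\ka'=\ka-k+1$, hence the shift changes between levels, which must be tracked carefully — past one string of $R$'s that sits to the left (from $D^{(k)}$) and one that sits to the right (from $A^{(k)}$), using the Yang--Baxter equation in both spectral-parameter orderings and the inductive mixed relation in the middle.

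The main obstacle I expect is bookkeeping rather than conceptual: keeping the spectral-parameter shifts ($q^{2\theta}$, $q^{2\ka'}$, $q^{2\ka}$, and the $q^{-1}$-substitutions) consistent through the train arguments, and verifying that at each level the operators really do act on the claimed subspace $(L^{(k-1)})^0$ so that the lower-level relations may legitimately be invoked (i.e.\ that the auxiliary $R^{(k,k)}$-factors preserve the weight-zero condition defining $(L^{(k-1)})^0$). A secondary subtlety is the $K$-versus-$R$ asymmetry in the $DRA$-relation: one must check that the specific combination $R^{(k,k)}_{q^{-1},ab}(v,q^{2\ka'}w)$ is precisely the intertwiner making $D^{(k)}_a$ and $A^{(k)}_b$ commute past each other, which at the base level reduces to the compatibility of $R^{(1,1)}_{q^{-1}}$ with $K^{(1,1)}=(R^{(1,1)}_{q^{-1}})^{\bar\om_2}$ under \eqref{YBE}, and at higher levels is inherited because both the $A$- and $D$-strings are built from the \emph{same} $R^{(k,k)}$ up to the $q\leftrightarrow q^{-1}$ swap recorded in \eqref{RQ}.
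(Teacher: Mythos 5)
Your proposal is correct and follows essentially the same route as the paper's (one-sentence) proof: the first two identities come from the Yang--Baxter equation \eqref{YBE} applied to the auxiliary $R$/$K$-strings together with the block $RLL$ relations, and the third additionally uses the vanishing of the $C$-operators on the nested quantum space to kill the cross terms, which is precisely what the paper invokes. One small correction: you say the cross terms in the mixed analogue of \eqref{AD} drop because ``$B,C$ act trivially'' on $L^0$, but $B$ does \emph{not} annihilate $L^0$ (it maps out of it); the cross terms vanish only because $C$ appears rightmost in each of them and $C^\pm_a(v)\equiv 0$ there, which is the sole property the paper's proof cites.
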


\begin{proof}
The first two identities follow from the Yang-Baxter equation and the defining relations of $A$ and $D$ operators. For the third identity we additionally need to use the property $C^\pm_a(v) \equiv 0$.
\end{proof}

%%% level-k lowest weight vector

Observe that $K$-matrices in \eqref{A1} and $R$-matrices in \eqref{D1} are ``at the fusion point''. More precisely, introduce (skew)symmetric projector $\Pi^\pm := \pm \frac{1}{q+q^{-1}}R^{(1,1)}(q^{\mp2},1) P^{(1,1)}$ and set $V^\pm := \Pi^\pm\, \C^n \ot \C^n$. The subspace $V^\pm$ is an irreducible $U_q(\mfL\mfgl_n)$-module with the lowest weight vector $\xi^\pm$ given by 
\equ{
\xi^+ = e^{(1)}_1 \ot e^{(1)}_1 , \qq  \xi^- = e^{(1)}_1 \ot e^{(1)}_2 - q\, e^{(1)}_2 \ot e^{(1)}_1 . \label{xi}
}
Denote
\[
\mr{K}^\theta_{kj} := \big[K^{(1,1)}_{a\tl a_i^0}(v, q^{2\theta} u^{(0)}_i)\,K^{(1,1)}_{aa_i^0}(v, u^{(0)}_i)\big]_{jk} , \qq
\mr{R}^\theta_{kj} := \big[R^{(1,1)}_{a\tl a_i^0}(v, u^{(0)}_i)\,R^{(1,1)}_{aa_i^0}(v, q^{-2\theta} u^{(0)}_i)\big]_{jk} ,
\]
where the matrix elements are taken with respect to the ``$a$'' space. 
Then $\mr{K}^\theta_{jk} \,\xi^{-\theta} = \mr{R}^\theta_{jk} \,\xi^{-\theta} = 0$ if $j>k$ and
\ali{
\mr{K}^-_{jj} \,\xi^{+} &= \bigg( \del_{j<n} + \del_{jn} \, \frac{q^2 v - q^{-2} u^{(0)}_i}{v - u^{(0)}_i} \bigg) \, \xi^{+} , \qq &
\mr{R}^-_{jj} \,\xi^{+} &= \bigg( \del_{j1}\, \frac{q^{-2} v - q^2 u^{(0)}_i}{v - u^{(0)}_i} + \del_{j>1} \bigg) \, \xi^{+} ,
\\
\mr{K}^+_{jj} \,\xi^{-} &= \bigg( \del_{j<n-1} + \del_{j\ge n-1}\,\frac{q\,v - q^{-1} u^{(0)}_i}{v - u^{(0)}_i} \bigg)\, \xi^{-} , \qq &
\mr{R}^+_{jj} \,\xi^{-} &= \bigg( \del_{j\le 2}\, \frac{q^{-1} v - q\, u^{(0)}_i}{v - u^{(0)}_i}  + \del_{j>2} \bigg)\, \xi^{-} .
}

For each $1\le j \le m_0$ define vector $\xi^{(j)}_-$ recursively by $\xi^{(1)}_- := \xi^-$ and
\[
\xi^{(j)}_- := e^{(1)}_1 \ot \xi^{(j-1)}_- \ot e^{(1)}_2 - q\, e^{(1)}_2 \ot \xi^{(j-1)}_- \ot e^{(1)}_1 .
\]
We also set $\xi^{(j)}_+ := \big(e^{(1)}_1\big)^{\ot 2j}$.
Then for each $1\le k \le n-1$ we define a \emph{level-$k$ nested vacuum vector} by
\equ{
\eta^{(k)}_\pm := \eta \ot \xi^{(m_0)}_{\pm} \ot \big(e^{(2)}_1\big)^{\ot m_{1}} \ot \cdots \ot \big(e^{(k+1)}_1\big)^{\ot m_{k}} \in (L^{(k)})^{0} , \label{eta-k}
}
where $\eta = \eta_{s_1} \ot \cdots \ot \eta_{s_\ell}$ is the lowest weight vector of $L^{(0)}= L$.
%
%%% action of a-operators
%
We then denote the $(1,1)$-th matrix element of monodromy matrices \eqref{A1}, \eqref{Ak} by 
\[
\mr{a}^{(k)}(v;\bm u^{(0\dots k-1)}) := \big[ A^{(k)}_a(v;\bm u^{(0\dots k-1)}) \big]_{11} , \qq
\mr{d}^{(k)}(v;\bm u^{(0\dots k-1)}) := \big[ D^{(k)}_a(v;\bm u^{(0\dots k-1)}) \big]_{11} .
\]
We will be interested in the action of these operators on $\eta^{(k)}_\pm$.

\begin{lemma} \label{L:a-action}
Vector $\eta^{(k)}_{-\theta}$ is lowest weight vector with respect to the action of the level-$k$ monodromy matrix. 
The operators $\mr{a}^{(k)}(v;\bm u^{(0\dots k-1)})$ and $\mr{d}^{(k)}(v;\bm u^{(0\dots k-1)})$ act on $\eta^{(k)}_{-\theta}$ by multiplication with 
\ali{
\text{for } k=1: \qu & \prod_{i=1}^\ell\la^{(i)}_{1}(v) \,, \label{a1}
\\
\text{for } 2\le k\le n-2: \qu & \prod_{i=1}^\ell\la^{(i)}_{k}(v) \prod_{i=1}^{m_{k-1}} \frac{q^{-1} v-q\,u^{(k-1)}_i}{v-u^{(k-1)}_i}   \,, \label{ak}
\\
\text{for } k= n-1: \qu & \prod_{i=1}^\ell\la^{(i)}_{n-1}(v) \prod_{i=1}^{m_{n-2}} \frac{q^{-1} v-q\,u^{(n-2)}_i}{v-u^{(n-2)}_i} \,  \prod_{i=1}^{m_0}\frac{q^{\theta'}\,v-q^{-\theta'}u^{(0)}_i}{v-u^{(0)}_i}  \,, \label{an-1}
\\
\text{for } k=n: \qu & \prod_{i=1}^\ell\la^{(i)}_{n}(v) \prod_{i=1}^{m_{n-1}} \frac{q^{-1} v-q\,u^{(n-1)}_i}{v-u^{(n-1)}_i}  \prod_{i=1}^{m_0} \frac{q^{2-\theta'}\,v-q^{\theta'-2}u^{(0)}_i}{v-u^{(0)}_i}  \,, \label{an}
\intertext{and} 
\text{for } k=1: \qu & \prod_{i=1}^\ell \la^{(i)}_{2n}(v)  \,, \label{d1}
\\
\text{for } 2\le k\le n-2: \qu & \prod_{i=1}^\ell \la^{(i)}_{2n-k+1}(v) \prod_{i=1}^{m_{k-1}} \frac{q\,v-q^{2\ka'-1} u^{(k-1)}_i}{v-q^{2\ka'}u^{(k-1)}_i}  \,, \label{dk}
\\
\text{for } k= n-1: \qu & \prod_{i=1}^\ell \la^{(i)}_{n+2}(v) \prod_{i=1}^{m_{n-2}} \frac{q\,v-q^{2\ka'-1} u^{(n-2)}_i}{v-q^{2\ka'}u^{(n-2)}_i} \,  \prod_{i=1}^{m_0}\frac{q^{-\theta'} v-q^{\theta'}\,u^{(0)}_i}{v-u^{(0)}_i}  \,, \label{dn-1}
\\
\text{for } k=n: \qu & \prod_{i=1}^\ell \la^{(i)}_{n+1}(v) \prod_{i=1}^{m_{n-1}} \frac{q\,v-q^{2\ka'-1} u^{(n-1)}_i}{v-q^{2\ka'}u^{(n-1)}_i} \prod_{i=1}^{m_0} \frac{q^{\theta'-2}\,v-q^{2-\theta'}u^{(0)}_i}{v-u^{(0)}_i}  \,, \label{dn}
}
respectively. Here $\ka'=\ka-k+1$ and $\theta' = \tfrac12(1+\theta)$.
\end{lemma}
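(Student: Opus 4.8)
The plan is to prove both assertions at once by induction on $k$, treating the argument as essentially a bookkeeping exercise built on the triangular action of each elementary building block on the relevant vacuum component. I would first note that a vector is lowest weight for the level-$k$ monodromy exactly when it is killed by every strictly lower-triangular matrix entry of $A^{(k)}_a(v)$ and of $D^{(k)}_a(v)$. For the base case $k=1$ I would combine \eqref{amono}, \eqref{dmono} with \eqref{A1}, \eqref{D1}: by \eqref{hw} the level-$0$ factors $A^{(1)}_a(v)$, $D^{(1)}_a(v)$ act on $\eta=\eta_{s_1}\ot\cdots\ot\eta_{s_\ell}$ lower-triangularly, with diagonal eigenvalues $\prod_i\la^{(i)}_j(v)$ and $\prod_i\la^{(i)}_{2n-j+1}(v)$ respectively, while the remaining $K^{(1,1)}$-factors (resp.\ $R^{(1,1)}$-factors) act on $\xi^{(m_0)}_{-\theta}$ lower-triangularly by the relations $\mr K^\theta_{jk}\,\xi^{-\theta}=\mr R^\theta_{jk}\,\xi^{-\theta}=0$ for $j>k$ recorded before the statement—these promote from a single pair $(\tl a^{0}_i,a^{0}_i)$ to all of $\xi^{(m_0)}_{-\theta}$ thanks to the recursive definition of $\xi^{(j)}_\pm$—and their diagonal eigenvalues on $\xi^{(m_0)}_{-\theta}$ are exactly the $\mr K^\theta_{kk}$- and $\mr R^\theta_{kk}$-eigenvalues listed there. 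This yields simultaneously the lowest-weight property of $\eta^{(1)}_{-\theta}$ and the formulas \eqref{a1}, \eqref{d1}.

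For the inductive step ($2\le k\le n$) I would read off from \eqref{RPIQ} the action of $R^{(k,k)}$ on the new vacuum slot, namely $[R^{(k,k)}(v,w)]_{pq}\,e^{(k)}_1=0$ for $p>q$, $[R^{(k,k)}(v,w)]_{pp}\,e^{(k)}_1=e^{(k)}_1$ for $p\ge 2$, and $[R^{(k,k)}(v,w)]_{11}\,e^{(k)}_1=\tfrac{qw-q^{-1}v}{w-v}\,e^{(k)}_1$. Writing \eqref{Ak} as $[A^{(k)}_a(v)]_{ij}=\sum_m[A^{(k-1)}_a(v)]_{i+1,m+1}\,[\prod_r R^{(k,k)}_{aa^{k-1}_r}(v,u^{(k-1)}_r)]_{mj}$ (and \eqref{Dk} in the mirror form, with the $R$-factors on the left and the shifted arguments $q^{2\ka'}u^{(k-1)}_r$), one checks that for $i>j$ every surviving summand routes through a strictly lower entry $[A^{(k-1)}_a(v)]_{i+1,m+1}$, which kills $\eta^{(k-1)}_{-\theta}$ by the inductive hypothesis; hence $\eta^{(k)}_{-\theta}$ is lowest weight. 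Applying the same collapse to the $(1,1)$-entry leaves only $m=1$ and gives, on $\eta^{(k)}_{-\theta}$,
\[
\mr a^{(k)}(v)\,\eta^{(k)}_{-\theta}=\Big(\textstyle\prod_{r=1}^{m_{k-1}}\tfrac{q^{-1}v-q\,u^{(k-1)}_r}{v-u^{(k-1)}_r}\Big)\,[A^{(k-1)}_a(v)]_{22}\,\eta^{(k-1)}_{-\theta}.
\]

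Since $[A^{(j)}_a(v)]_{pp}\,\eta^{(j)}_{-\theta}=[A^{(j-1)}_a(v)]_{p+1,p+1}\,\eta^{(j-1)}_{-\theta}$ for $p\ge 2$ with no extra factor, iterating down to level $1$ reduces the computation to $[A^{(1)}_{a\bm a^{\tl 0,0}}(v)]_{kk}$ acting on $\eta^{(1)}_{-\theta}$, already evaluated in the base case as the product of $\prod_i\la^{(i)}_k(v)$ with the $\mr K^\theta_{kk}$-eigenvalues on $\xi^{(m_0)}_{-\theta}$. Substituting $\ka'=\ka-k+1$ and $\theta'=\tfrac12(1+\theta)$ and reading off those $\mr K^\theta_{kk}$-eigenvalues (which equal $1$ for $k\le n-2$ and become the stated rational factor once $k\ge n-1$, resp.\ $k=n$ in the symplectic case) produces \eqref{ak}, \eqref{an-1}, \eqref{an}; the computation for $\mr d^{(k)}(v)$ is entirely analogous, with the level-$1$ termination giving $\prod_i\la^{(i)}_{2n-k+1}(v)$ via \eqref{dmono} and the $\mr R^\theta_{kk}$-eigenvalues in place of the $\mr K^\theta_{kk}$-ones, yielding \eqref{dk}, \eqref{dn-1}, \eqref{dn}.

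The representation-theoretic input—triangularity of each block on its vacuum and the consequent collapse of off-diagonal contributions—is routine given Section~3 and the relations listed before the statement. The part I expect to be the main obstacle is purely combinatorial: tracking how the $(1,1)$-entry at level $k$ descends through the successive $1$-reductions to the $(k,k)$-entry at level $1$, correctly attributing the spectral shifts $q^{2\theta}$, $q^{-2\theta}$ coming from the fusion in \eqref{A1}, \eqref{D1} and $q^{2\ka'}$ from \eqref{Dk}, and in particular matching the boundary values at $k=n-1$ and $k=n$, where the nested type-$C_n$/$D_n$ vacuum $\xi^{(m_0)}_{-\theta}$ is precisely what produces the extra rational prefactors in \eqref{an-1}, \eqref{an}, \eqref{dn-1}, \eqref{dn} that are absent when $k\le n-2$.
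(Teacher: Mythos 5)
Your overall strategy --- induction on $k$, triangularity of each building block on its vacuum slot, and descent of the $(1,1)$-entry through the successive $1$-reductions with the extra rational factor appearing only at the newest slot --- is the same as the paper's, and your argument is essentially complete in the symplectic case: there $\eta^{(k)}_{-\theta}=\eta^{(k)}_{+}$, the level-$0$ part of the nested vacuum is the pure tensor $(e^{(1)}_1)^{\ot 2m_0}$, every $K^{(1,1)}$-factor in \eqref{A1} is individually triangular in the auxiliary space on its own slot, and the $(1,1)$-entries simply multiply.

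The gap is in the orthogonal case. You assert that the single-pair relations $\mr{K}^{\theta}_{jk}\,\xi^{-\theta}=0$ for $j>k$ ``promote from a single pair $(\tl a^{0}_i,a^{0}_i)$ to all of $\xi^{(m_0)}_{-\theta}$ thanks to the recursive definition of $\xi^{(j)}_{\pm}$''. This does not follow formally. First, the product in \eqref{A1} is not an interleaved product of the pairs $K_{a\tl a^0_i}K_{aa^0_i}$: all tilde'd factors stand to the left of all untilde'd ones, and $\xi^{(m_0)}_{-}$ is not a tensor product of copies of $\xi^{-}$ sitting in paired slots but the nested combination $e^{(1)}_1\ot\xi^{(m_0-1)}_{-}\ot e^{(1)}_2-q\,e^{(1)}_2\ot\xi^{(m_0-1)}_{-}\ot e^{(1)}_1$. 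Second, and more importantly, the individual $K$-factors are \emph{not} triangular on the slots carrying $e^{(1)}_2$: as the paper computes, $(\mc{E}^{(1)}_{q^{-1},ji}(u/v))^{\bar\om}\,e^{(1)}_2$ has a nonzero strictly-lower contribution for $(i,j)=(n,n-1)$. The vanishing of $\big[\mr{K}^{[m_0]}\big]_{n,n-1}$ on $\xi^{(m_0)}_{-}$, and the identification of the diagonal eigenvalues (including the equality $\la^{[m_0]}_{n-1}(v)=\la^{[m_0]}_{n}(v)$ that your descent implicitly uses), therefore require a cancellation between the two terms of the recursive definition; the paper establishes this by a separate induction on $m_0$ that occupies the larger half of its proof. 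Without that step your argument proves the lemma only for $\mfsp_{2n}$, and for $\mfso_{2n}$ only when $m_0\le 1$.
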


\begin{proof} 
First, note that the level-$k$ nested monodromy matrix may be written as
\aln{
A^{(k)}_{a}(v;\bm u^{(0\dots k-1)}) & = \Bigg(\prod_{i=m_0}^{1} K^{(k,1)}_{a \tl a^{0}_i}(v,q^{2\theta} u^{(0)}_i) \Bigg) \Bigg(\prod_{i=1}^{m_0} K^{(k,1)}_{a a^{0}_i}(v, u^{(0)}_i) \Bigg) A^{(k)}_a(v)  \Bigg(\prod_{l=2}^{k} \prod_{i=1}^{m_{l-1}} R^{(k,l)}_{aa^{l-1}_i}(v,u^{(l-1)}_i) \Bigg) .
}
In order to prove that $\eta_{-\theta}^{(k)}$ is a lowest weight vector, consider the action of the elements $\big[A^{(k)}_{a}(v;\bm u^{(0\dots k-1)}\big]_{ij}$ with $i\geq j$.
It follows from (\ref{E1}--\ref{RK-factor}) that, when acting on $\eta^{(k)}_{+}$, the $R$- and $K$-matrices become upper triangular matrices in the ``$a$'' auxiliary space.
That is, for $i\geq j$,
\aln{
\mc{E}^{(l)}_{j'i'}(u/v)\, e^{(l)}_1 &= \del_{ij} \bigg(\del_{i'1}\, \frac{q^{-1} v-q\,u}{v-u} + \del_{i'>1} \bigg) e^{(l)}_1, \el
(\mc{E}^{(l)}_{q^{-1},j'i'}(u/v))^{\bar\om}\, e^{(l)}_1 &= \del_{ij} \bigg( \del_{i' \bar l}\,\, \frac{q\,v-q^{-1}u}{v-u} + \del_{i'<\bar l} \bigg) e^{(l)}_1, 
}
where the primed notation is the same as for \eqref{RK-factor}.
Furthermore, as $\eta$ is a lowest weight vector for $L$, the action of $A^{(k)}_a(v)$ is also upper triangular,
\[
\big[A^{(k)}_a(v)\big]_{ij} \eta = \del_{ij} \prod_{p=1}^\ell\la^{(p)}_{k}(v,c_p)\, \eta  \qq \text{for } i \geq j.
\]
Therefore, taking a product of these matrices, the action of the level-$k$ nested monodromy matrix will also be upper triangular, from which we conclude that $\eta^{(k)}_{+}$ is a lowest weight vector.
The identities (\ref{a1}-\ref{an}) may be found by
\aln{
\Bigg[\prod_{i=1}^{m_{l-1}} R^{(k,l)}_{a a^{l-1}_i}(v,u^{(l-1)}_i) \Bigg]_{j1} (e^{(l)}_1)^{\ot m_{l-1}} &= \del_{j1}\left( \del_{kl}\prod_{i=1}^{m_{k-1}} \frac{q^{-1} v-q\,u^{(k-1)}_i}{v-u^{(k-1)}_i} + \del_{k>l} \right) (e^{(l)}_1)^{\ot m_{l-1}} ,
\\
\Bigg[\prod_{i=m_0}^{1} K^{(k,1)}_{a \tl a^{0}_i}(v,q^{2\theta} u^{(0)}_i) \Bigg]_{j1} (e^{(1)}_1)^{\ot m_0} \hspace{.22cm} &= \del_{j1}\left( \del_{kn}\, \prod_{i=1}^{m_0}\frac{q^{1-\theta}\,v-q^{\theta-1}u^{(0)}_i}{q^{-\theta}v-q^{\theta}u^{(0)}_i} + \del_{k<n} \right) (e^{(1)}_1)^{\ot m_0} ,
\\
\Bigg[\prod_{i=1}^{m_0} K^{(k,1)}_{a a^{0}_i}(v, u^{(0)}_i) \Bigg]_{11} (e^{(1)}_1)^{\ot m_0} \hspace{.22cm} &=  \left( \del_{kn}\, \prod_{i=1}^{m_0}\frac{q\,v-q^{-1} u^{(0)}_i}{v- u^{(0)}_i} + \del_{k<n} \right) (e^{(1)}_1)^{\ot m_0} ,
}
where the matrix elements are taken with respect to the the ``$a$'' space.
Expressions (\ref{d1}-\ref{dn}) are obtained similarly. This concludes the proof in the symplectic case.

The orthogonal case follows by the same arguments and the fact that $\xi_-^{(m_0)}$ is a lowest weight vector with respect to the action of
\[
\mr{K}^{[m_0]} :=  \left(\prod_{i=m_0}^{1} K^{(1,1)}_{a \tl a^{0}_i}(v,q^{2} u^{(0)}_i) \right) \left(\prod_{i=1}^{m_0} K^{(1,1)}_{aa^{0}_i}(v,u^{(0)}_i) \right).
\]
We will prove the latter by induction on $m_0$. The $m_0=1$ case has already been explained above. Let $s\ge1$. We assume that $\xi_-^{(s)}$ is a lowest weight vector for $\mr{K}^{[s]}$ of weight $\lambda^{[s]}_i(v) =1$ for $i<n-1$ and $\lambda^{[s]}_i(v) = \prod_{j=1}^s \frac{q\,v-q^{-1}u}{v-u}$ for $i=n-1,n$.
We write the action of $\mr{K}^{[s+1]}$ on $\xi_-^{(s+1)}$ as
\ali{
\big[\mr{K}^{[s+1]}_{ij}\big]_{ij} \cdot \xi_-^{(s+1)} &= \sum_{b,c=1}^{n} (\mc{E}^{(1)}_{q^{-1},bi}(q^{2}u /v))^{\bar\om}\, \big[\mr{K}^{[s]}\big]_{bc}(\mc{E}^{(1)}_{q^{-1},jc}(u/v))^{\bar\om} \el & \hspace{2cm} \times \big( e^{(1)}_1 \ot \xi_-^{(s)} \ot e^{(1)}_2 - q e^{(1)}_2 \ot \xi_-^{(s)} \ot e^{(1)}_1  \big)
\nn\\[.3em]
&= \sum_{\substack{b,c=1 \\ i\leq b \leq c}}^{n} (\mc{E}^{(1)}_{q^{-1},bi}( q^{2} u /v))^{\bar\om}\, \big[\mr{K}^{[s]}\big]_{bc} (\mc{E}^{(1)}_{q^{-1},jc}(u/v))^{\bar\om}  \cdot e^{(1)}_1  \ot \xi_-^{(s)} \ot e^{(1)}_2
\el
&\qu -q \sum_{\substack{b,c=1 \\ b \leq c \leq j}}^{n} (\mc{E}^{(1)}_{q^{-1},bi}(q^{2}u /v))^{\bar\om}\, \big[\mr{K}^{[s]}\big]_{bc} (\mc{E}^{(1)}_{q^{-1},jc}(u/v))^{\bar\om}  \cdot e^{(1)}_2  \ot \xi_-^{(s)} \ot e^{(1)}_1 \label{K-xi}
}
since $\xi_-^{(s)}$ and $e^{(1)}_1$ are lowest weight vectors in their relevant representations.
Observe that
\aln{
(\mc{E}^{(1)}_{q^{-1},ji}(u/v))^{\bar\om}\, e^{(1)}_1 &= \del_{ij}\bigg(\del_{in} \frac{q\,v-q^{-1}u}{v-u} + \del_{i< n} \bigg) \, e^{(1)}_1 \\ &\qu + \del_{i<n} \del_{jn} \frac{q^{i-j}(q-q^{-1})}{v/u-1}\, e^{(1)}_{n-i+1} 
\intertext{and}
(\mc{E}^{(1)}_{q^{-1},ji}(u/v))^{\bar\om}\, e^{(1)}_2 &= \del_{ij}\bigg(\del_{i,n-1} \frac{q\,v-q^{-1}u}{v-u} + \del_{i\neq n-1} \bigg) \, e^{(1)}_2 - \del_{in} \del_{j,n-1} \frac{q^2-1}{u/v-1} \, e^{(1)}_{1}
\\
& \qu + \del_{i<n-1} \del_{j,n-1} \frac{q^{i-j}(q-q^{-1})}{v/u-1} \, e^{(1)}_{n-i+1} \,.
}
Assume that $i>j$. It clear from above that $\mr{K}^{[s+1]}_{ij} \cdot \xi_-^{(s+1)} = 0$ if $j<n-1$. Hence we only need to consider the case with $i=n$ and $j=n-1$. Then \eqref{K-xi} becomes
\aln{
\big[\mr{K}^{[s+1]}\big]_{n,n-1} \cdot \xi_-^{(s+1)} &= - \bigg( \la_{n}^{[s]}(v)\,\frac{v-u}{q^{-1}v-q\,u}\cdot\frac{q^2-1}{u/v-1} \\ & \qq\qu - q\,\la_{n}^{[s]}(v)\,\frac{q^2-1}{q^2u/v-1}\cdot\frac{q\,v-q^{-1}u}{v-u}\bigg)\, e^{(1)}_1  \ot \xi_-^{(s)} \ot e^{(1)}_1 = 0,
}
as required. Next, assume that $i=j=n$. Then
\aln{
\big[\mr{K}^{[s+1]}\big]_{nn} \cdot \xi_-^{(s+1)} &= \la_{n}^{[s]}(v)\left( \frac{v-u}{q^{-1}v-q\,u} \,e^{(1)}_1 \ot \xi_-^{(s)} \ot e^{(1)}_2  - q\,\frac{q\,v-q^{-1}u}{v-u} \,e^{(1)}_2 \ot \xi_-^{(s)} \ot e^{(1)}_1 \right)
\\ & \qu + q\,\la_{n-1}^{[s]}(v)\,\frac{q^2-1}{q^2 u/v - 1}\cdot\frac{q^{-1}(q-q^{-1})}{v/u-1}\,e^{(1)}_1 \ot \xi_-^{(s)} \ot e^{(1)}_2
\\
& =  \la_{n}^{[s]}(v)\,\frac{q\,v-q^{-1}u}{v-u} \left( e^{(1)}_1 \ot \xi_-^{(s)} \ot e^{(1)}_2  - q\,e^{(1)}_2 \ot \xi_-^{(s)} \ot e^{(1)}_1 \right)
\\
& =  \la_{n}^{[s+1]}(v)\,\xi_-^{(s+1)} .
\intertext{In a similar way, for $i=j=n-1$, we find}
\big[\mr{K}^{[s+1]}\big]_{n-1,n-1} \cdot \xi_-^{(s+1)} &= \la_{n-1}^{[s]}(v)\left( \frac{q\,v-q^{-1}u}{v-u} \,e^{(1)}_1 \ot \xi_-^{(s)} \ot e^{(1)}_2 - q^2\frac{v-u}{v-q^2u} \,e^{(1)}_2 \ot \xi_-^{(s)} \ot e^{(1)}_1 \right)
\\
& \qu - \la_{n}^{[s]}(v)\,\frac{q^{-1}(q-q^{-1})}{q^{-2}v/u - 1}\cdot\frac{q^2-1}{u/v-1} \,e^{(1)}_2 \ot \xi_-^{(s)} \ot e^{(1)}_1
\\
& =  \la_{n-1}^{[s]}(v)\,\frac{q\,v-q^{-1}u}{v-u} \left( e^{(1)}_1 \ot \xi_-^{(s)} \ot e^{(1)}_2  - q\,e^{(1)}_2 \ot \xi_-^{(s)} \ot e^{(1)}_1 \right)
\\
& =  \la_{n-1}^{[s+1]}(v)\,\xi_-^{(s+1)} .
}
Lastly, when $i=j<n-1$ we obtain $\big[\mr{K}^{[s+1]}\big]_{ii} \cdot \xi_-^{(s+1)} = \xi_-^{(s+1)}$.
Then, using the arguments similar to those in the symplectic case yields the wanted result.
\end{proof}

\if0
First we consider the case $i=n-k, j=n-k+1$. By induction hypothesis, $\mr{K}^{(k),[s]}_{n-k,n-k+1} \cdot \xi_-^{(s)}=0$, hence only two terms remain in the sum,
\aln{
\mr{K}^{(k),[s+1]}_{n-k,n-k+1} \cdot \zeta_-^{(s+1)} &= - q^{-1}\frac{q-q^{-1}}{v/q^2u-1}\, \lambda^{(k),[s]}_{n-k+1}(v)  \big( e^{(1)}_2  \ot e^{(1)}_2  \big) \ot \zeta_-^{(s)} \\ &\qu + \lambda^{(k),[s]}_{n-k}(v) \bigg(\frac{v-u}{q^{-1}v-q\,u}\bigg) \bigg( \frac{q-q^{-1}}{v/u-1} \bigg) \big( e^{(1)}_2  \ot e^{(1)}_2 \big) \ot \zeta_-^{(s)}
\\
&= - \frac{u(q-q^{-1})}{q^{-1}v-q\,u}\, \big(\lambda^{(k),[s]}_{n-k+1}(v)-\lambda^{(k),[s]}_{n-k}(v)\big)  \big( e^{(1)}_2 \ot  e^{(1)}_2  \big) \ot \zeta_-^{(s)} .
}
Similarly, for $i=n-k+1, j=n-k$. 
\[
\mr{K}^{(k),[s+1]}_{n-k+1,n-k} \cdot \zeta_-^{(s+1)} = q\bigg(\frac{v(q-q^{-1})}{q^{-1}v-q\,u}\bigg)\, \big( \lambda^{(k),[s]}_{n-k+1}(v) -  \lambda^{(k),[s]}_{n-k}(v) \big)  \big( e^{(1)}_1 \ot e^{(1)}_1  \big) \ot \zeta_-^{(s)}.
\]
Hence, from the induction assumptions, these expressions vanish as required. 

In the case $i>j$ with $j < n-k$, by considering the action on $e^{(1)}_2$, all terms in each of the two sums vanish. Indeed, we must have $i\leq b \leq c \leq  j < i$ for the first sum and $i \leq  b \leq c \leq j < i$ for the second.

Finally, for $i=j$, the case $i=j=n-k$ yields
\aln{
\mr{K}^{(k),[s+1]}_{n-k,n-k} \cdot \zeta_-^{(s+1)} &= \bigg(\frac{q^2-1}{u/v-1}\bigg)  \mr{K}^{(k),[s]}_{n-k,n-k+1} \big( e^{(1)}_1 \ot e^{(1)}_1  \big) \ot \zeta_-^{(s)} \\ &\qu +\lambda^{(k),[s]}_{n-k}(v) \bigg(\frac{q\,v-q^{-1}u}{v-u} \bigg) \big(e^{(1)}_1 \ot  e^{(1)}_2 \big) \ot \zeta_-^{(s)}
\\
& \qu -\lambda^{(k),[s]}_{n-k+1}(v) \bigg( \frac{1-q^{-2}}{v/q^2u-1}\bigg)\bigg(\frac{q^2-1}{u/v-1}\bigg) \big(e^{(1)}_2 \ot e^{(1)}_1 \big) \ot  \zeta_-^{(s)} \\ &\qu - q \, \lambda^{(k),[s]}_{n-k}(v) \bigg(\frac{v-u}{q^{-1}v-q\,u} \bigg) \big(e^{(1)}_2 \ot e^{(1)}_1 \big) \ot  \zeta_-^{(s)}.
}
Using the induction assumptions, this becomes
\aln{
\mr{K}^{(k),[s+1]}_{n-k,n-k} \cdot \zeta_-^{(s+1)} & =  -\lambda^{(k),[s]}_{n-k}(v) \bigg[q \, \bigg(\frac{v-u}{q^{-1}v-q\,u} \bigg) + \bigg( \frac{1-q^{-2}}{v/q^2u-1}\bigg)\bigg(\frac{q^2-1}{u/v-1}\bigg) \bigg] \big(e^{(1)}_2 \ot e^{(1)}_1 \big) \ot  \zeta_-^{(s)}
\\
& \qu + \lambda^{(k),[s]}_{n-k}(v) \bigg(\frac{q\,v-q^{-1}u}{v-u} \bigg) \big(e^{(1)}_1 \ot e^{(1)}_2 \big) \ot  \zeta_-^{(s)}
\\
& = - q\, \lambda^{(k),[s]}_{n-k}(v) \bigg[ \, \frac{(v-u)^2- uv(q-q^{-1})^2}{(v-u)(q^{-1}v-q\,u)}  \bigg] \big(e^{(1)}_2 \ot e^{(1)}_1 \big) \ot  \zeta_-^{(s)} \\ &\qu + \lambda^{(k),[s]}_{n-k}(v) \bigg(\frac{q\,v-q^{-1}u}{v-u} \bigg) \big(e^{(1)}_1 \ot e^{(1)}_2 \big) \ot  \zeta_-^{(s)}.
}
The numerator of the factor in the square bracket factorises to $(qv-q^{-1}u)(q^{-1}v-qu)$, and so we arrive at
\[
\mr{K}^{(k),[s]}_{n-k,n-k} \cdot \zeta_-^{(s+1)} =  \lambda^{(k),[s]}_{n-k}(v)  \bigg(\frac{q\,v-q^{-1}u}{v-u} \bigg) \zeta_-^{(s+1)}.
\]
The case $i=j=n-k+1$ proceeds similarly, 
\aln{
\mr{K}^{(k),[s+1]}_{n-k+1,n-k+1} \cdot \zeta_-^{(s+1)} &= \lambda^{(k),[s]}_{n-k+1}(v) \bigg(\frac{v-u}{q^{-1}v-q\,u} \bigg) \big(e^{(1)}_1 \ot  e^{(1)}_2 \big) \ot \zeta_-^{(s)} \\&\qu -q\,\bigg(\frac{q^2-1}{q^2u/v-1}\bigg) \bigg(\frac{q\,v-q^{-1} u}{v-u} \bigg) \mr{K}^{(k),[s]}_{n-k,n-k+1} \big( e^{(1)}_1 \ot e^{(1)}_1  \big) \ot \zeta_-^{(s)} 
\\
&\qu + q\,\lambda^{(k),[s]}_{n-k}(v) \bigg( \frac{1-q^{-2}}{v/u-1}\bigg)\bigg(\frac{q^2-1}{q^2u/v-1}\bigg) \big(e^{(1)}_1 \ot e^{(1)}_2 \big) \ot  \zeta_-^{(s)} \\ &\qu - q \, \lambda^{(k),[s]}_{n-k+1}(v) \bigg(\frac{q\,v-q^{-1}u}{v-u} \bigg) \big(e^{(1)}_2 \ot e^{(1)}_1 \big) \ot  \zeta_-^{(s)}
\\
&= \lambda^{(k),[s]}_{n-k+1}(v) \bigg[ \frac{(v-u)^2 -  uv(q-q^{-1})^2}{(v-u)(q^{-1}v-q\,u)} \bigg] \big(e^{(1)}_1 \ot e^{(1)}_2 \big) \ot  \zeta_-^{(s)} \\ &\qu - q \, \lambda^{(k),[s]}_{n-k+1}(v) \bigg(\frac{q\,v-q^{-1}u}{v-u} \bigg) \big(e^{(1)}_2 \ot e^{(1)}_1 \big) \ot  \zeta_-^{(s)}
\\
&= \lambda^{(k),[s]}_{n-k+1}(v)  \bigg(\frac{q\,v-q^{-1}u}{v-u} \bigg) \zeta_-^{(s+1)}.
}
Note that $\lambda^{(k),[s+1]}_{n-k}(v) = \lambda^{(k),[s+1]}_{n-k+1}(v)$ as required.
The remaining cases for $i=j$ with $i,j < n-k$ proceed as in the symplectic case. 
\fi

%%%%%%%%%%%%%%%%%%%%%%%%%%%%%%%%%%%%%%%%%%%%%%%%%%%%%%%%%%%%%%%%%%%%
%%%%%%%%%%%%%%%%%%%%%%%%%%%%%%%%%%%%%%%%%%%%%%%%%%%%%%%%%%%%%%%%%%%%

\subsection{Transfer matrices, Bethe vectors and Bethe equations}

Recall the notion of level-$k$ monodromy matrices, viz.\ Definition~\ref{D:mono}.

%%% level-k transfer matrices

\begin{defn}
We define level-$0$ transfer matrix by
\equ{
\tau(v) := \tr_a T_a(v) = \tr_a A^{(1)}_a(v) + \tr_a D^{(1)}_a(v) .
}
For all $1\le k \le n-1$ we define level-$k$ transfer matrices by
\[
\tau^{(k)}(v;\bm u^{(0\dots k-1)}) := \tr_a A^{(k)}_{a}(v;\bm u^{(0\dots k-1)}) 
\]
and
\[
\wt\tau^{(k)}(v;\bm u^{(0\dots k-1)}) := \tr_a D^{(k)}_{a}(v;\bm u^{(0\dots k-1)}) .
\]
\end{defn}

%%% level-k creation operators

Next, for each level of nesting, $0 \le k \le n-1$, we introduce $m_k$-magnon creation operators.

\begin{defn}
We define level-$0$ creation operator by
\equ{
\mr{B}^{(0)}_{\bm a^{\tl0,0}}(\bm u^{(0)}) := \prod_{i=1}^{m_0} \be_{\ta^0_i a^0_i} (u^{(0)}_i) 
}
where
\equ{
\be_{\ta^0_i a^0_i} (u^{(0)}_{i}) := \sum_{j,k=1}^n q^{-j}\, t_{\bar\jmath, n+k}(u^{(0)}_{i}) \ot e^{(1)*}_k \ot e^{(1)*}_j  \in  \End(L^{(0)}) \ot V^{(1)*}_{\ta^0_i} \ot  V^{(1)*}_{a^0_i} . \label{beta}
}
For all $1\le k \le n-1$ we define level-$k$ creation operators by
\ali{
\mr{B}^{(k)}_{\bm a^{k}}(\bm u^{(k)}; \bm u^{(0\dots k-1)}) &:= \prod_{i=1}^{m_k} B^{(k+1)}_{a^{k}_i}(u^{(k)}_i;\bm u^{(0\dots k-1)}) 
}
where 
\equ{
B^{(k+1)}_{a^{k}_i}(u^{(k)}_i;\bm u^{(0\dots k-1)}) := \sum_{j=1}^{n-k} \big[ A^{(k)}_{a^{k}_i}(u^{(k)}_i;\bm u^{(0\dots k-1)}) \big]_{1,1+j} \ot e^{(k+1)*}_j \in \End(L^{(k)}) \ot  V^{(k+1)*}_{a^k_i}. \label{beta1}
}
\end{defn}

%%% level-k Bethe vectors

\nc{\BV}[2]{\Phi^{(#1)}_\theta (\bm u^{(#1\dots n-1)};\bm u^{(0\dots#2)})}

\nc{\BVi}{\Phi^{(1)}_\theta(\bm u^{(1\dots n-1)};\bm u^{(0)})}
\nc{\BVii}{\Phi^{(2)}_\theta(\bm u^{(2\dots n-1)};\bm u^{(0,1)})}

Recall the notion of nested vacuum vector $\eta^{(k)}_{\pm}$, viz.\ \eqref{eta-k}. The level-$(n\!-\!1)$ nested vacuum vector is our reference state for constructing the (off-shell) Bethe vectors.

\begin{defn} \label{D:BV}
For all $1\le k \le n-1$ we define level-$k$ Bethe vectors by
\equ{
\BV{k}{k-1} := \prod_{i=k}^{n-1} \mr{B}^{(i)}_{\bm a^{i}}(\bm u^{(i)}; \bm u^{(0\dots i-1)}) \cdot \eta^{(n-1)}_{-\theta}. \label{BV-k}
}
The level-$0$ Bethe vector is defined by
\equ{
\Phi^{(0)}_\theta(\bm u^{(0\dots n-1)}) := \mr{B}^{(0)}_{\bm a^{\tl 0,0}}(\bm u^{(0)}) \, \BVi . \label{BV-0}
}
\end{defn}

Note that vector $\BV{k}{k-1}$ is an element of the level-$k$ quantum space $L^{(k)}$ and has $\bm u^{(0\dots k-1)}$ and $\bm c$ as its free parameters.

For $0\le k \le n-1$ set $\mf{S}_{\bm m_{k\dots n-1}} := \mf{S}_{m_k}\times \cdots \times \mf{S}_{m_{n-1}}$.
For any $\si^{(l)} \in \mf{S}_{m_l}$ with $k\le l \le n-1$ define an action of $\mf{S}_{\bm m_{k\dots n-1}}$ on $\BV{k}{k-1}$ by 
\[
\si^{(l)}: \bm u^{(k\dots n-1)} \mapsto \bm u^{(k\dots n-1)}_{\si^{(l)}} := ( \bm u^{(k)} , \dots , \bm u^{(l)}_{\si^{(l)}}, \dots , \bm u^{(n-1)} \} , \qu
\bm u^{(l)}_{\si^{(l)}} := (u^{(l)}_{\si^{(l)}(1)}, \dots , u^{(l)}_{\si^{(l)}(m_l)}).
\]
For further convenience we set $\si^{(l)}_j\in\mf{S}_{m_l}$ to be the $j$-cycle such that 
\[
\bm u^{(l)}_{\si^{(l)}_j} = (u^{(l)}_j,u^{(l)}_{j+1},\dots,u^{(l)}_{m_l},u^{(l)}_1,\dots,u^{(l)}_{j-1}) .
\]
We will also make use of the notation
\[
\bm u^{(l)}_{\si^{(l)}_j\!,\,u^{(l)}_j\to v} := (v,u^{(l)}_{j+1},\dots,u^{(l)}_{m_l},u^{(l)}_1,\dots,u^{(l)}_{j-1}) .
\]

\begin{lemma} \label{L:BV-symm}
Bethe vector $\BV{k}{k-1}$ is invariant under the action of $\mf{S}_{\bm m_{k\dots n-1}}$.
\end{lemma}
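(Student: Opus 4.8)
The plan is to reduce the symmetry statement to the exchange relations for the creation operators at each level of nesting. Recall that $\BV{k}{k-1}$ is built as an ordered product $\prod_{i=k}^{n-1} \mr{B}^{(i)}_{\bm a^i}(\bm u^{(i)};\bm u^{(0\dots i-1)})$ acting on $\eta^{(n-1)}_{-\theta}$, and each $\mr{B}^{(i)}$ is itself an ordered product $\prod_{l=1}^{m_i} B^{(i+1)}_{a^i_l}(u^{(i)}_l;\bm u^{(0\dots i-1)})$ of the elementary creation operators defined in \eqref{beta1}. Since a permutation $\si^{(l)}\in\mf{S}_{m_l}$ only reorders the parameters $\bm u^{(l)}$ within a single level-$l$ block, and the group $\mf{S}_{\bm m_{k\dots n-1}}$ is generated by such permutations for $k\le l\le n-1$, it suffices to show that $\mr{B}^{(l)}_{\bm a^l}(\bm u^{(l)};\bm u^{(0\dots l-1)})$ is invariant under $\mf{S}_{m_l}$ for each fixed $l$; the other factors are untouched and the action on the nested vacuum is unaffected.

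For a fixed level $l$, I would argue that the elementary operators $B^{(l+1)}_{a}(u;\bm u^{(0\dots l-1)})$ at different values of the spectral parameter satisfy an exchange relation governed by an $R$-matrix. Concretely, the operators $[A^{(l)}_a(v)]_{1,1+j}$ are matrix entries of the level-$l$ monodromy matrix $A^{(l)}_a$, which satisfies the $RTT$-type relation from Lemma~\ref{L:a,d-rll} (first identity) with the reduced $R$-matrix $R^{(l,l)}_{ab}(v,w)$; passing to the $(l{+}1)$-reduced slots in the way \eqref{def-a-B} and \eqref{beta1} prescribe, this yields a relation of the form $B^{(l+1)}_a(u)\,B^{(l+1)}_b(v) = B^{(l+1)}_b(v)\,B^{(l+1)}_a(u)\,\check{R}^{(l+1,l+1)}_{ab}(u,v)$, exactly analogous to \eqref{gln:bb} but one reduction level deeper (with the normalised $\check R$ acting as the identity at the top). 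Since $\check R$ is the normalised braiding, the product $\prod_{i=1}^{m_l} B^{(l+1)}_{a^l_i}(u^{(l)}_i)$, with its free tensor indices $e^{(l+1)*}_{j_i}$ contracted into $V^{(l+1)*}_{a^l_i}$, is symmetric under simultaneous exchange of a pair of spectral parameters and the corresponding auxiliary spaces — this is the standard argument that a monodromy-matrix creation operator string is invariant under the Weyl/symmetric group because $\check R$ satisfies the braid relation and unitarity. Iterating adjacent transpositions gives invariance under all of $\mf{S}_{m_l}$.

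The cleanest way to package this is to invoke the fact — established for the $\mfgl$-type creation operators in the cited work \cite{BeRa08}, and applicable here because at level $k\ge1$ the relevant algebra is precisely $U_q(\mfL\mfgl_{n-k+1})$ via the identifications in Section~\ref{sec:Uq} — that a nested Bethe vector built from iterated $B$-operators is symmetric in each block of rapidities. One must only check that the level-$l$ monodromy matrix $A^{(l)}_a(v;\bm u^{(0\dots l-1)})$ genuinely satisfies the $\mfgl$-type $RTT$ relation on $L^{(l)}$; but that is the content of the first identity of Lemma~\ref{L:a,d-rll}, so the abstract result applies verbatim. I would therefore phrase the proof as: by Lemma~\ref{L:a,d-rll} the operators $A^{(l)}_a$ obey the defining relations of $U_q(\mfL\mfgl_{n-l+1})$ on $L^{(l)}$, hence by the standard argument (cf.~\cite{BeRa08}) the string $\mr{B}^{(l)}_{\bm a^l}(\bm u^{(l)};\bm u^{(0\dots l-1)})$ is invariant under $\mf{S}_{m_l}$; assembling over all $l$ from $k$ to $n-1$ and noting the other factors and $\eta^{(n-1)}_{-\theta}$ are inert gives the claim.

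The main obstacle is the bookkeeping in the reduction step: one has to verify that restricting the $RTT$ relation \eqref{gln:dd}/Lemma~\ref{L:a,d-rll} to the $(1,1{+}j)$ entries and re-interpreting the free index $j$ as living in $V^{(l+1)*}$ really does produce the $\check{R}^{(l+1,l+1)}$-braiding on the creation operators (rather than an inhomogeneous relation with an unwanted extra term, as happens for the $A$–$B$ relation \eqref{gln:db}). This is the analogue of deriving \eqref{gln:bb} from \eqref{gln:dd}, just shifted one level, and the key point is that the extra term in \eqref{gln:db} drops out when both indices are set to the lowest value, leaving a clean homogeneous quadratic relation. Once that relation is in hand, the symmetry is formal.
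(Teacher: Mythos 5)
Your proposal follows essentially the same route as the paper's proof: the exchange relation $B^{(l+1)}_{a}(u)\,B^{(l+1)}_{b}(v)=B^{(l+1)}_{b}(v)\,B^{(l+1)}_{a}(u)\,\check{R}^{(l+1,l+1)}_{ab}(u,v)$, obtained from the nested $RTT$ relation of Lemma~\ref{L:a,d-rll} exactly as \eqref{gln:bb} is obtained from \eqref{gln:dd}, combined with the fact that $\check{R}$ acts on the nested vacuum by $1$ and the standard push-through argument of \cite{BeRa08}. Two caveats. First, your reduction ``the other factors are untouched'' is not literally true: for $i>l$ the block $\mr{B}^{(i)}_{\bm a^{i}}(\bm u^{(i)};\bm u^{(0\dots i-1)})$ does depend on $\bm u^{(l)}$, through the $R$-matrix factors of the nested monodromy matrix, so the $\check{R}_{a^l_j a^l_{j+1}}$ produced by exchanging a pair inside $\mr{B}^{(l)}$ must be carried through those deeper blocks via the Yang--Baxter equation --- where it precisely implements the permutation of the arguments $u^{(l)}_j$ appearing there --- before it reaches $\eta^{(n-1)}_{-\theta}$ and acts by $1$. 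This is exactly the ``standard argument'' you invoke, but your framing hides where the work is; a single level-$l$ block is not separately invariant. Second, the paper's proof additionally records the exchange relation for the level-$0$ creation operators $\be_{\ta^0_i a^0_i}$, which carry two auxiliary spaces and are conjugated by $\big(\check{R}^{(1,1)}_{a^0_ia^0_{i+1}}\big)^{-1}$ on one pair and by $\check{R}^{(1,1)}_{\ta^0_i\ta^0_{i+1}}$ on the other; this is what is needed for the $k=0$ instance of the symmetry (the $\mf{S}_{m_0}$-invariance of $\Phi^{(0)}_{\theta}$, which is how the lemma is used in the proof of Theorem~\ref{T:sosp}). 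Your argument covers only the levels $l\ge 1$ and would need this extra relation, which does not follow from the $\mfgl$-type template alone since it involves \eqref{BB} for the full $B^{\pm}(u)$ block of $L^{\pm}(u)$.
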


\begin{proof} 
This follows using standard arguments, the fact that $\check{R}$-matrices act on $\eta^{(n-1)}_{-\theta}$ by $1$, and relations
\aln{
\be_{\ta^0_i a^0_i}(u^{(0)}_i)\,\be_{\ta^0_{i+1} a^0_{i+1}}(u^{(0)}_{i+1}) = \be_{\ta^0_i a^0_i}(u^{(0)}_{i+1})\,\be_{\ta^0_{i+1} a^0_{i+1}}(u^{(0)}_{i}) \,\big( \check{R}^{(1,1)}_{a^0_ia^0_{i+1}}(u^{(0)}_i,u^{(0)}_{i+1}) \big)^{-1} \check{R}^{(1,1)}_{\ta^0_i\ta^0_{i+1}}(u^{(0)}_i,u^{(0)}_{i+1}) 
}
and
\aln{
& B^{(k+1)}_{a^{k}_i}(u^{(k)}_i;\bm u^{(0\dots k-1)})  B^{(k+1)}_{a^{k}_{i+1}}(u^{(k)}_{i+1};\bm u^{(0\dots k-1)}) \\ 
& \qq \equiv
B^{(k+1)}_{a^{k}_i}(u^{(k)}_{i+1};\bm u^{(0\dots k-1)})
B^{(k+1)}_{a^{k}_{i+1}}(u^{(k)}_i;\bm u^{(0\dots k-1)}) \check{R}^{(k+1,k+1)}_{a^k_i a^k_{i+1}} (u^{(k)}_i,u^{(k)}_{i+1})
}
which follow from \eqref{BB} and \eqref{gln:bb}; here $\equiv$ denotes equality of operators in the space $L^{(k)}$.
\end{proof}

%%% gln eigenvectors

\begin{thrm} \label{T:gln}
Bethe vector $\Phi^{(1)}_{\theta}(\bm u^{(1\dots n-1)};\bm u^{(0)})$ is an eigenvector of $\tau^{(1)}(v;\bm u^{(0)})$ with the eigenvalue
\ali{
\La^{(1)}(v;\bm u^{(1\dots n-1)}; \bm u^{(0)}) &:= \prod_{i=1}^{m_1} \frac{q\,v-q^{-1}u^{(1)}_i}{v-u^{(1)}_i} \prod_{i=1}^\ell\la^{(i)}_{1}(v) \el
& \qu\; + \sum_{k=2}^{n-2} \prod_{i=1}^{m_{k-1}} \frac{q^{-1} v-q\,u^{(k-1)}_i}{v-u^{(k-1)}_i} \prod_{i=1}^{m_k} \frac{q\,v-q^{-1}u^{(k)}_i}{v-u^{(k)}_i} \prod_{i=1}^\ell\la^{(i)}_{k}(v)  \el 
& \qu\; + \prod_{i=1}^{m_{n-2}} \frac{q^{-1} v-q\,u^{(n-2)}_i}{v-u^{(n-2)}_i} \prod_{i=1}^{m_{n-1}} \frac{q\, v-q^{-1}\,u^{(n-1)}_i}{v-u^{(n-1)}_i} \prod_{i=1}^{m_{0}} \frac{q^{\theta'}\,v-q^{-\theta'}u^{(0)}_i}{v-u^{(0)}_i} \prod_{i=1}^\ell\la^{(i)}_{n-1}(v) \el
& \qu\; + \prod_{i=1}^{m_0} \frac{q^{2-\theta'}v-q^{-2+\theta'} u^{(0)}_i}{v-u^{(0)}_i} \prod_{i=1}^{m_{n-1}} \frac{q^{-1} v-q\,u^{(n-1)}_i}{v-u^{(n-1)}_i} \prod_{i=1}^\ell\la^{(i)}_{n}(v) \label{LA} 
\intertext{and an eigenvector of $\wt\tau^{(1)}(v;\bm u^{(0)})$ with the eigenvalue}
\wt{\La}^{(1)}(v;\bm u^{(1\dots n-1)};\bm u^{(0)}) & := \prod_{i=1}^{m_1} \frac{v-q^{2\ka}u^{(1)}_i}{q\,v-q^{2\ka-1} u^{(1)}_i} \prod_{i=1}^\ell\la^{(i)}_{2n}(v) \el
& \qu\; + \sum_{k=2}^{n-2} \prod_{i=1}^{m_{k-1}} \frac{q\,v-q^{2\ka-2k+1} u^{(k-1)}_i}{v-q^{2\ka-2k+2}u^{(k-1)}_i} \prod_{i=1}^{m_k} \frac{v-q^{2\ka-2k+2}u^{(k)}_i}{q\,v-q^{2\ka-2k+1}u^{(k)}_i} \prod_{i=1}^\ell\la^{(i)}_{2n-k+1}(v)  \el 
& \qu \; + \prod_{i=1}^{m_{n-2}} \frac{q\,v-q^{3-2\theta} u^{(n-2)}_i}{v-q^{4-2\theta}u^{(n-2)}_i} \prod_{i=1}^{m_{n-1}} \frac{v-q^{4-2\theta}u^{(n-1)}_i}{q\,v-q^{3-2\theta}u^{(n-1)}_i} \prod_{i=1}^{m_{0}} \frac{q^{-\theta'}\,v-q^{\theta'}u^{(0)}_i}{v-u^{(0)}_i} 
\prod_{i=1}^\ell\la^{(i)}_{n+2}(v)  \el 
& \qu\; + \prod_{i=1}^{m_{n-1}} \frac{q\,v-q^{1-2\theta} u^{(n-1)}_i}{v-q^{2-2\theta}u^{(n-1)}_i} \prod_{i=1}^{m_0} \frac{q^{-2+\theta'}\,v-q^{2-\theta'}u^{(0)}_i}{v-u^{(0)}_i} \prod_{i=1}^\ell\la^{(i)}_{n+1}(v) \label{tLA} 
}
provided 
\equ{
\Res{v\to u^{(k)}_j} \La^{(1)}(v;\bm u^{(1\dots n-1)};\bm u^{(0)}) = 0 \qu\text{for}\;\; 1\le k \le n-1,\; 1\le j \le m_k. \label{BE-gln}
}
\end{thrm}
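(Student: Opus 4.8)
\emph{Strategy.} I would prove the theorem following the inductive scheme of \cite{BeRa08}, by descending induction on the nesting level, reducing the $U_q(\mfgl_n)$ spectral problem to the $U_q(\mfgl_{n-1})$ one. For $1\le k\le n-1$ let $\La^{(k)},\wt\La^{(k)}$ denote the index-shifted analogues of \eqref{LA}, \eqref{tLA}; the claim to be proved for each such $k$ is that the level-$k$ Bethe vector $\BV{k}{k-1}$ is a joint eigenvector of $\tau^{(k)}(v;\bm u^{(0\dots k-1)})=\tr_a A^{(k)}_a(v)$ and $\wt\tau^{(k)}(v;\bm u^{(0\dots k-1)})=\tr_a D^{(k)}_a(v)$ with these eigenvalues, provided $\Res{v\to u^{(l)}_j}\La^{(k)}(v)=0$ for all $k\le l\le n-1$ and all $j$; the case $k=1$ is the theorem. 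The base case $k=n-1$ is an ordinary rank-one ($U_q(\mfgl_2)$, XXZ-type) algebraic Bethe ansatz on the two-dimensional auxiliary space: one commutes $\mr{a}^{(n-1)}(v)=[A^{(n-1)}_a(v)]_{11}$ and $[A^{(n-1)}_a(v)]_{22}$ through $\prod_i B^{(n)}_{a^{n-1}_i}(u^{(n-1)}_i)$ using \eqref{gln:ab} and the ``$DB$'' consequence of the $RTT$-relation \eqref{gln:dd}, evaluates the diagonal action on $\eta^{(n-1)}_{-\theta}$ via Lemma~\ref{L:a-action} (here the bare magnon factor in \eqref{an} cancels the scalar $\prod_i\frac{q^{-1}v-q\,u^{(n-1)}_i}{v-u^{(n-1)}_i}$ by which $R^{(n,n)}$ acts, so that $[A^{(n-1)}_a(v)]_{22}$ acts on $\eta^{(n-1)}_{-\theta}$ by $\prod_j\la^{(j)}_{n}(v)\prod_i\frac{q^{2-\theta'}v-q^{\theta'-2}u^{(0)}_i}{v-u^{(0)}_i}$), and collects the unwanted terms using Lemma~\ref{L:BV-symm}; the claim for $\wt\tau^{(n-1)}$ follows by the same computation with the intertwiner \eqref{DRA=ARD} of Lemma~\ref{L:a,d-rll} replacing the $A$--$A$ relation.

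\emph{Inductive step.} For $1\le k\le n-2$ split the auxiliary space $\C^{n-k+1}=\C\,e^{(k)}_1\op\C^{n-k}$, so that $\tau^{(k)}(v)=\mr{a}^{(k)}(v)+\tr_a\widehat A^{(k)}_a(v)$ with $\widehat A^{(k)}_a(v)$ the $1$-reduced block; by Definition~\ref{D:mono} this block, with the magnon $R$-matrices $R^{(k+1,k+1)}_{aa^k_i}(v,u^{(k)}_i)$ restored, is precisely $A^{(k+1)}_a(v)$. Writing $\BV{k}{k-1}=\prod_i B^{(k+1)}_{a^k_i}(u^{(k)}_i)\cdot\BV{k+1}{k}$, I move $\mr{a}^{(k)}(v)$ past the $B^{(k+1)}$'s with \eqref{gln:ab} and $\tr_a\widehat A^{(k)}_a(v)$ past them with the level-$(k+1)$ relation \eqref{gln:db}, the latter producing exactly the magnon $R$-matrices that promote the reduced block to the full $A^{(k+1)}_a(v)$ acting on $\BV{k+1}{k}$. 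The ``wanted'' contributions are $\prod_i\frac{q\,v-q^{-1}u^{(k)}_i}{v-u^{(k)}_i}\,\mr{a}^{(k)}(v)\,\BV{k+1}{k}$ and $\tau^{(k+1)}(v;\bm u^{(0\dots k)})\,\BV{k+1}{k}$: the former because $\mr{a}^{(k)}(v)$ acts on $\BV{k+1}{k}$ by the scalar $\veps^{(k)}(v)$ of Lemma~\ref{L:a-action} --- the spurious terms arising when $\mr{a}^{(k)}(v)$ is commuted through the higher creation operators carry lowering operators to the right, which annihilate the lowest-weight vector $\eta^{(n-1)}_{-\theta}$ (use \eqref{gln:dd} and $C^\pm_a(v)\equiv0$) --- and the latter by the induction hypothesis, $\tau^{(k+1)}(v)\,\BV{k+1}{k}=\La^{(k+1)}(v)\,\BV{k+1}{k}+(\text{unwanted})$. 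Adding these two reproduces the recursion $\La^{(k)}(v)=\prod_i\frac{q\,v-q^{-1}u^{(k)}_i}{v-u^{(k)}_i}\,\veps^{(k)}(v)+\La^{(k+1)}(v)$, which telescopes with the base case to \eqref{LA}; the strictly parallel computation for $\wt\tau^{(k)}(v)=\tr_a D^{(k)}_a(v)$, using the intertwiner \eqref{DRA=ARD}, the mixed $D$-versus-$B$ relations and the diagonal values \eqref{d1}--\eqref{dn}, gives the analogous recursion and telescopes to \eqref{tLA}.

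\emph{Unwanted terms --- the main obstacle.} The crux, and the most laborious step, is the vanishing of all unwanted contributions under \eqref{BE-gln}. At level $k$ the off-diagonal summand of \eqref{gln:ab} (resp.\ of \eqref{gln:db}) produces, for each $j$, a term in which $v$ replaces $u^{(k)}_j$; by the $\mf{S}_{m_k}$-invariance of the Bethe vector (Lemma~\ref{L:BV-symm}) one may take $j=1$, and after combining the $\mr{a}^{(k)}$-channel with the reduced-block channel the total coefficient of the vector obtained from $\BV{k}{k-1}$ by the substitution $u^{(k)}_1\to v$ equals, up to a non-vanishing factor, $\Res{v\to u^{(k)}_1}\La^{(k)}(v)$; since only the $k$-th and $(k+1)$-st summands of $\La^{(1)}$ are singular at $v=u^{(k)}_1$ and the recursion carries unit coefficient on $\La^{(k+1)}$, one has $\Res{v\to u^{(k)}_1}\La^{(k)}=\Res{v\to u^{(k)}_1}\La^{(1)}=0$; summing over levels gives $\tau^{(1)}(v)\,\BVi=\La^{(1)}(v)\,\BVi$. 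For $\wt\tau^{(1)}$ the unwanted terms sit at the spectral points fixed by the $q^{2\ka'}$-shifted arguments in \eqref{Dk} and by the $P^{(k,k)}$-part pole of \eqref{DRA=ARD}; one checks that the resulting residue conditions on $\wt\La^{(1)}$ are, after those shifts, equivalent to \eqref{BE-gln}, so the same Bethe equations make $\BVi$ an eigenvector of $\wt\tau^{(1)}$ as well. Two steps demand genuine care: tracking the magnon-$R$-matrix dressing closely enough that each reduced-block channel closes onto $\tau^{(k+1)}$ (respectively $\wt\tau^{(k+1)}$) with the correct unit rational prefactor; and checking, as in the proof of Lemma~\ref{L:a-action}, that the $\bm u^{(0)}$-dependent $K$-matrix factors surface only at the levels $k=n-1$ and $k=n$, which is what produces the asymmetric ``tail'' in the last two summands of \eqref{LA}--\eqref{tLA}.
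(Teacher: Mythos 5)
Your treatment of $\tau^{(1)}$ is essentially the paper's own argument: the same descending induction through the levels \`a la Belliard--Ragoucy, the same splitting into $\mr{a}^{(k)}$ and the reduced block that closes onto $\tau^{(k+1)}$, the same use of Lemmas~\ref{L:BV-symm} and~\ref{L:a-action}, and the same identification of the unwanted-term coefficients with $\Res{v\to u^{(k)}_j}\La^{(1)}$. Where you genuinely part ways with the paper is the second half, the $\wt\tau^{(1)}$ eigenvalue. You propose to run the full machine a second time for the $D$-channel and to cancel its unwanted terms directly, arguing that the residue conditions on $\wt\La^{(1)}$ at the $q$-shifted points (which sit at $v=q^{2\ka-2k}u^{(k)}_j$, coming from the pole of the intertwiner in \eqref{DRA=ARD} and the shifts in \eqref{Dk}) are equivalent to \eqref{BE-gln}. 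The paper instead invokes Lemma~\ref{L:a,d-rll} to note that $\tau^{(1)}(v)$ and $\wt\tau^{(1)}(w)$ commute on $L^{(1)}$, so that once the Bethe vector is a $\tau^{(1)}$-eigenvector it is automatically a joint eigenvector, and only the wanted terms of the $D$-channel need be computed to read off \eqref{tLA}. Your route is viable but carries a real extra burden that you should not wave away with ``one checks'': the equivalence of the shifted residue conditions with \eqref{BE-gln} is not a formal rearrangement --- it requires an identity relating $\la^{(i)}_{2n-k+1}(q^{2\ka-2k}u)/\la^{(i)}_{2n-k}(q^{2\ka-2k}u)$ to $\la^{(i)}_{k}(u)/\la^{(i)}_{k+1}(u)$, which holds for the fused (skew)symmetric weights \eqref{la-symm}--\eqref{la-skew} (it is the same kind of constraint as the identity $\la_{n+2}/\la_n=\la_{n+1}/\la_{n-1}$ quoted in Remark~\ref{R:BEs}(ii)) but must be proved, either by direct computation on these weights or from the structure theory of $U_q(\mfL\mfg_{2n})$-modules. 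What your approach buys is independence from any nondegeneracy assumption implicit in the simultaneous-diagonalisation shortcut; what the paper's approach buys is that this weight identity never has to be established at all.
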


\begin{proof}
{\it Step 1. $\BVi$ is an eigenvector of $\tau^{(1)}(v,\bm u^{(0)})$.}
This is a standard result (see e.g.\ \cite{BeRa08}), hence we give an outline of the proof only. We rewrite the exchange relations \eqref{gln:ab} and \eqref{gln:db} as
\aln{
\mr{a}^{(k)}(v)\, B^{(k+1)}_{a^k_i}(u^{(k)}_i) &= \frac{q\,v-q^{-1}u^{(k)}_i}{v-u^{(k)}_i}\,B^{(k+1)}_{a^k_i}(u^{(k)}_i) \,\mr{a}^{(k)}(v) \\ &\qu - \frac{v/u^{(k)}_i}{v-u^{(k)}_i}\Res{w\to u^{(k)}_i}\frac{q\,w-q^{-1}u^{(k)}_i}{w-u^{(k)}_i}\,B^{(k+1)}_{a^k_i}(v) \, \mr{a}^{(k)}(w) , \\[.75em]
A^{(k+1)}_a(v)\,B^{(k+1)}_{a^k_i}(u^{(k)}_i) &= B^{(k+1)}_{a^k_i}(u^{(k)}_i) \,A^{(k+1)}_a(v)\,R^{(k+1)}_{aa^k_i}(v,u^{(k)}_i) \\ &\qu - \frac{v/u^{(k)}_i}{v-u^{(k)}_i}\Res{w\to u^{(k)}_i} B^{(k+1)}_{a^k_i}(v) A^{(k+1)}_a(w)\,R^{(k+1,k+1)}_{aa^k_i}(w,u^{(k)}_i).
}
Then, using the usual symmetry arguments for the Bethe vector, viz.\ Lemma \ref{L:BV-symm}, we obtain
\aln{
& \tau^{(1)}(v;\bm u^{(0)})\,\BVi \\ 
&\qu = \mr{B}^{(1)}_{\bm a^{1}}(\bm u^{(1)};\bm u^{(0)}) \Bigg( \prod_{i=1}^{m_1} \frac{q\,v-q^{-1}u^{(1)}_i}{v-u^{(1)}_i} \, \mr{a}^{(1)}(v;\bm u^{(0)}) + \tau^{(2)}(v;\bm u^{(0,1)}) \Bigg) \BVii \\ 
& \qq - \sum_{j=1}^{m_1} \frac{v/u^{(1)}_j}{v-u^{(1)}_j} \, \mr{B}^{(1)}_{\bm a^{1}}(\bm u^{(1)}_{\si^{(1)}_j,u^{(1)}_j\to v};\bm u^{(0)}) \\ 
& \qq\qu \times \Res{w\to u^{(1)}_j}\!\Bigg( \prod_{i=1}^{m_1} \frac{q\,w-q^{-1}u^{(1)}_i}{w-u^{(1)}_i} \,\mr{a}^{(1)}(w;\bm u^{(0)}) + \tau^{(2)}(w;\bm u^{(0,1)}_{\si^{(1)}_j,u^{(1)}_j\to v}) \Bigg) \Phi_{\theta}^{(2)}(\bm u^{(2\dots n-1)};\bm u^{(0,1)}_{\si^{(1)}_j}) .
}
Proceeding in the same way and using Lemma \ref{L:a-action} we find
\aln{
\tau^{(1)}(v;\bm u^{(0)})\,\BVi = \La^{(1)}(v;\bm u^{(1\dots n-1)};\bm u^{(0)})\,\BVi
}
provided \eqref{BE-gln} holds, as required.

{\noindent\it Step 2. $\BVi$ is an eigenvector of $\wt\tau^{(1)}(v,\bm u^{(0)})$.} 
It follows from Lemma \ref{L:a,d-rll} that transfer matrices $\tau^{(1)}(v;\bm u^{(0)})$ and $\wt\tau^{(1)}(v;\bm u^{(0)})$ form a family of commutative operators in the space $L^{(1)}$. They can thus be diagonalized simultaneously. Assuming \eqref{BE-gln} holds, it is sufficient to focus on the wanted terms in the exchange relations. In particular, it follows from \eqref{DRA=ARD} that
\aln{
\mr{d}^{(k)}(v)\, B^{(k+1)}_{a^k_i}(u^{(k)}_i) &= \frac{v-q^{2\ka-2k+2}u^{(k)}_i}{q\,v-q^{2\ka-2+1}u^{(k)}_i}\,B^{(k+1)}_{a^k_i}(u^{(k)}_i) \,\mr{d}^{(k)}(v) + UWT, \\[.75em]
D^{(k+1)}_a(v)\,B^{(k+1)}_{a^k_i}(u^{(k)}_i) &= B^{(k+1)}_{a^k_i}(u^{(k)}_i)\,R^{(k+1)}_{q^{-1},aa^k_i}(v,q^{2\ka-2k+2}u^{(k)}_i) \,D^{(k+1)}_a(v) + UWT ,
}
where $UWT$ denotes the unwanted terms. The eigenvalue \eqref{tLA} now follows by Lemma \ref{L:a-action} and the standard arguments.
\end{proof}

%%% First main result

The Theorem below is our first main result.

\begin{thrm} \label{T:sosp} 
Bethe vector $\Phi^{(0)}_{\theta}(\bm u^{(0\dots n-1)})$ is an eigenvector of $\tau(v)$ with eigenvalue
\equ{
\La(v;\bm u^{(0\dots n-1)}) := \La^{(1)}(v;\bm u^{(1\dots n-1)};\bm u^{(0)}) + \wt{\La}^{(1)}(v;\bm u^{(1\dots n-1)};\bm u^{(0)}) \label{GA}
}
provided 
\ali{
\Res{v\to u^{(0)}_j} \La(v;\bm u^{(0\dots n-1)}) &= 0 \qu\text{for}\;\; 1\le j \le m_0,\; \label{BE-top}
\;\;\text{and}  \\
\Res{v\to u^{(k)}_j} \La^{(1)}\big(v;\bm u^{(1\dots n-1)};\bm u^{(0)}\big) &= 0 \qu\text{for}\;\; 1\le j \le m_k, \; 1 \le k \le n-1 .   \label{BE-bottom}
}
\end{thrm}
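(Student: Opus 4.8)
The plan is to run the top, ``outermost'' stage of the nested algebraic Bethe ansatz on top of Theorem~\ref{T:gln}. Write $\tau(v) = \tr_a A^{(1)}_a(v) + \tr_a D^{(1)}_a(v)$ and $\Phi^{(0)}_\theta(\bm u^{(0\dots n-1)}) = \mr B^{(0)}_{\bm a^{\tl0,0}}(\bm u^{(0)})\,\BVi$. The first task is to commute each of $\tr_a A^{(1)}_a(v)$ and $\tr_a D^{(1)}_a(v)$ through the level-$0$ creation operator $\mr B^{(0)}_{\bm a^{\tl0,0}}(\bm u^{(0)})$, i.e.\ through the $m_0$ factors $\be_{\ta^0_i a^0_i}(u^{(0)}_i)$, using the exchange relations \eqref{AB}, \eqref{DB}, \eqref{AD}, their mixed counterparts, and the $R^{(1,1)}$-relations for the diagonal blocks -- exactly as in the proof of Theorem~\ref{T:gln}, only now with the additional $K$-matrix structure. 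Each such commutation yields a ``wanted'' piece, in which the transfer-matrix operator passes through essentially unchanged up to multiplication by rational functions of the ratios $v/u^{(0)}_i$ and the production of $K$- and $R$-matrices acting in the auxiliary ``$a$''-space, together with ``unwanted'' pieces, in which $v$ gets exchanged with one of the $u^{(0)}_j$. The two identities \eqref{UK=PK} and \eqref{KU=KP} are exactly what allows one to rewrite the coefficients of the unwanted pieces as residues at $w\to u^{(0)}_j$ of products of $R^{(1,1)}$- and $K^{(1,1)}$-matrices.

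For the wanted pieces, the $K$- and $R$-matrices generated in the commutation reassemble, by the very definitions \eqref{A1} and \eqref{D1}, into the level-$1$ monodromy matrices, so that
\[
\tau(v)\,\Phi^{(0)}_\theta(\bm u^{(0\dots n-1)}) = \mr B^{(0)}_{\bm a^{\tl0,0}}(\bm u^{(0)})\,\big( \tau^{(1)}(v;\bm u^{(0)}) + \wt\tau^{(1)}(v;\bm u^{(0)}) \big)\,\BVi \; + \; (\text{unwanted}) .
\]
Since \eqref{BE-bottom} coincides with \eqref{BE-gln}, Theorem~\ref{T:gln} applies and evaluates $\tau^{(1)}(v;\bm u^{(0)}) + \wt\tau^{(1)}(v;\bm u^{(0)})$ on $\BVi$ as multiplication by the scalar $\La^{(1)}(v;\bm u^{(1\dots n-1)};\bm u^{(0)}) + \wt\La^{(1)}(v;\bm u^{(1\dots n-1)};\bm u^{(0)}) = \La(v;\bm u^{(0\dots n-1)})$. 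Hence the wanted part already equals $\La(v;\bm u^{(0\dots n-1)})\,\Phi^{(0)}_\theta(\bm u^{(0\dots n-1)})$, and everything reduces to killing the unwanted part.

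The unwanted terms split into two families. Those in which $v$ is exchanged with a $u^{(k)}_j$ for some $1\le k\le n-1$ are created inside the level-$1$ problem and are removed by \eqref{BE-bottom}, again via Theorem~\ref{T:gln}. Those in which $v$ is exchanged with a $u^{(0)}_j$ are the genuinely new feature of the $C_n/D_n$ case. Using the symmetry of $\mr B^{(0)}_{\bm a^{\tl0,0}}$ in the $\bm u^{(0)}$ (cf.\ Lemma~\ref{L:BV-symm}) one reduces to the contribution of a single cyclic representative; then, using \eqref{UK=PK}--\eqref{KU=KP} to trade the $U^{(1,1)}$- and $\bar U^{(1,1)}$-weighted operators for residues, Lemma~\ref{L:a-action} for the action of $\mr a^{(k)},\mr d^{(k)}$ on the nested vacuum, and Theorem~\ref{T:gln} once more on the level-$1$ Bethe vector with $u^{(0)}_j$ replaced by $v$, the leftover coefficient is identified with $\Res{v\to u^{(0)}_j}\La(v;\bm u^{(0\dots n-1)})$ (times a modified Bethe vector). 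The condition \eqref{BE-top} then forces it to vanish. It should also be noted at the outset that $\tau(v)$ and $\tau(w)$ commute, by the Yang--Baxter equation \eqref{YBE} and the RLL relations, so that the eigenvalue is meaningful, and that $\Phi^{(0)}_\theta$ does not depend on the ordering within each $\bm u^{(k)}$ by Lemma~\ref{L:BV-symm}.

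The step I expect to be the main obstacle is this last one: showing that the unwanted contributions coming from $\tr_a A^{(1)}_a(v)$ and those coming from $\tr_a D^{(1)}_a(v)$ add up precisely to a multiple of $\Res{v\to u^{(0)}_j}\La$. Unlike in the pure $\mfgl_n$ case of Theorem~\ref{T:gln}, the unwanted terms at $v = u^{(0)}_j$ draw on both summands of $\tau(v)$ at once, so they can be seen to cancel only after $\tau^{(1)}$ and $\wt\tau^{(1)}$ have been combined; this is where the precise form of the $K$-matrices at the fusion point, the identities \eqref{UK=PK}--\eqref{KU=KP}, the weight functions \eqref{la-symm}--\eqref{la-skew}, and the $\theta,\theta'$-dependent shifts recorded in Lemma~\ref{L:a-action} all have to be tracked with care, and where the orthogonal and symplectic cases have to be verified separately.
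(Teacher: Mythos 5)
Your proposal follows essentially the same route as the paper's proof: commute $\tr_a A^{(1)}_a(v)$ and $\tr_a D^{(1)}_a(v)$ through the level-$0$ creation operators via \eqref{AB}--\eqref{DB} recast with \eqref{UK=PK}--\eqref{KU=KP} into wanted-plus-residue form, reassemble the wanted part into the level-$1$ monodromy matrices \eqref{A1}--\eqref{D1}, invoke Theorem~\ref{T:gln} under \eqref{BE-bottom}, and observe that the level-$0$ unwanted terms combine into $\Res{w\to u^{(0)}_j}\La$ (using the symmetry of $\La$ in $\bm u^{(0)}$), which \eqref{BE-top} kills. Your identification of the crux --- that these residues cancel only after $\tau^{(1)}$ and $\wt\tau^{(1)}$ are summed --- is exactly the point the paper's argument turns on.
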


\begin{proof}
Using \eqref{RQ} and \eqref{RUK1}--\eqref{KU=KP}, we deduce $R^{(1,1)}_{21}(u,v) = R^{(1,1)}_{q^{-1},12}(v,u)$ and
\[
(K^{(1,1)}_{12}(u,q^{2\ka }v))^{-1} = K^{(1,1)}_{21}(v,q^{2\theta} u) = K^{(1,1)}_{q^{-1},12}(u,q^{-2\theta} v) .
\]
These relations allow us to rewrite \eqref{AB} and \eqref{DB} as
\aln{
A^{\pm}_1(v) B^{\pm}_2(u) &= R^{(1,1)}_{q^{-1},12}(v,u) B^{\pm}_2(u) A^{\pm}_1(v) K^{(1,1)}_{12}(v,q^{2\theta} u) \\ & \qu - \frac1{v-u}\Res{w\to u} R^{(1,1)}_{q^{-1},12}(w,u) B^{\pm}_1(v) A^{\pm}_2(u) K^{(1,1)}_{12}(w,q^{2\theta} u) , 
\\[.5em]
D^{\pm}_1(v) B^{\pm}_2(u) &= K^{(1,1)}_{q^{-1},12}(v,q^{-2\theta} u) B^{\pm}_2(u) D^{\pm}_1(v) R^{(1,1)}_{12}(v,u) \\ & \qu - \frac1{v-u}\Res{w\to u} K^{(1,1)}_{q^{-1},12}(w,q^{-2\theta} u) B^{\pm}_1(v) D^{\pm}_2(u) R^{(1,1)}_{12}(w,u) .
}
Then, using \eqref{RUK1}, \eqref{beta} and replacing the $A$ and $D$ operators with their images in $\End(L^{(0)})$, we obtain
\aln{
A^{(1)}_a(v)\, \be_{\ta^0_i a^0_i} (u^{(0)}_{i}) &= \be_{\ta^0_i a^0_i} (u^{(0)}_{i}) \, K^{(1,1)}_{aa^0_i}(v,u^{(0)}_{i}) A^{(1)}_a(v) K^{(1,1)}_{a\ta^0_i}(v,q^{2\theta} u^{(0)}_{i}) \\ & \qu - \frac1{v-u^{(0)}_i} \,\be_{\ta^0_i a^0_i} (v)  \Res{w\to u^{(0)}_{i}} K^{(1,1)}_{aa^0_i}(w,u^{(0)}_{i}) A^{(1)}_a(w) K^{(1,1)}_{a\ta^0_i}(w,q^{2\theta} u^{(0)}_{i}) , 
\\[.5em]
D^{(1)}_a(v)\, \be_{\ta^0_i a^0_i} (u^{(0)}_{i}) &= \be_{\ta^0_i a^0_i} (u^{(0)}_{i}) \, R^{(1,1)}_{aa^0_i}(v,q^{-2\theta} u^{(0)}_{i}) D^{(1)}_a(v) R^{(1,1)}_{a\ta^0_i}(v,u^{(0)}_{i}) \\ & \qu - \frac1{v-u^{(0)}_i} \, \be_{\ta^0_i a^0_i} (v) \Res{w\to u^{(0)}_{i}} R^{(1,1)}_{aa^0_i}(w,q^{-2\theta} u^{(0)}_{i}) D^{(1)}_a(v) R^{(1,1)}_{a\ta^0_i}(w,u^{(0)}_{i}) .
}
The relations above together with Lemma \ref{L:BV-symm} and the standard symmetry arguments imply that
\aln{
& \tau(v)\, \Phi^{(0)}_{\theta}(\bm u^{(0\dots n-1)}) \\ 
& \qu = \mr{B}^{(0)}_{\bm a^{\tl 0,0}}(\bm u^{(0)}) \, \Big( \tau^{(1)}(v;\bm u^{(0)}) + \wt{\tau}^{(1)}(v;\bm u^{(0)}) \Big) \,\Phi^{(1)}_{\theta}(\bm u^{(1\dots n-1)};\bm u^{(0)})\\
& \qq - \sum_{j=1}^{m_0} \frac1{v-u^{(0)}_j} \, \mr{B}^{(0)}_{\bm a^{\tl 0,0}}(\bm u^{(0)}_{\si^{(0)}_j,u^{(0)}_j\to v}) \Res{w\to u^{(0)}_j} \Big( \tau^{(1)}(w;\bm u^{(0)}_{\si^{(0)}_j}) + \wt{\tau}^{(1)}(w;\bm u^{(0)}_{\si^{(0)}_j}) \Big)\, \Phi^{(1)}_{\theta}(\bm u^{(1\dots n-1)};\bm u^{(0)}_{\si^{(0)}_j}) .
}
Theorem \eqref{T:gln} allows us to replace $ \tau^{(1)}(w;\bm u^{(0)}_{\si^{(0)}_j})$ and $\wt{\tau}^{(1)}(w;\bm u^{(0)}_{\si^{(0)}_j})$ with their eigenvalues, provided \eqref{BE-bottom} holds, giving
\aln{
& \tau(v)\, \Phi^{(0)}_{\theta}(\bm u^{(0\dots n-1)}) \\ 
& \qu = \mr{B}^{(0)}_{\bm a^{\tl 0,0}}(\bm u^{(0)}) \,  \Lambda(v;\bm u^{(0...n-1)})  \,\Phi^{(1)}_{\theta}(\bm u^{(1\dots n-1)};\bm u^{(0)})\\
& \qq - \sum_{j=1}^{m_0} \frac1{v-u^{(0)}_j} \, \mr{B}^{(0)}_{\bm a^{\tl 0,0}}(\bm u^{(0)}_{\si^{(0)}_j,u^{(0)}_j\to v}) \Res{w\to u^{(0)}_j}  \Lambda(w;\bm u^{(0)}_{\si^{(0)}_j},\bm u^{(1...n-1)}) \, \Phi^{(1)}_{\theta}(\bm u^{(1\dots n-1)};\bm u^{(0)}_{\si^{(0)}_j}) .
}
Noting that, from its definition and the exact forms of $\Lambda^{(1)}$ and $\wt{\Lambda}^{(1)}$, we have $\Lambda(w;\bm u^{(0)}_{\si^{(0)}_j},\bm u^{(1...n-1)}) = \Lambda(w;\bm u^{(0...n-1)})$.
Therefore, \eqref{BE-top} implies that each term in the sum on the second line individually vanishes, and we are left with the desired result.
\end{proof}

\begin{rmk}[i] \label{R:BEs}
The equations \eqref{BE-bottom} are Bethe equations for a $U_q(\mfgl_{n})$-symmetric spin chain, with an additional factor when $j=n-1$ due to level-0 excitations. 
For convenience, set $u^{(n)}_j := u^{(0)}_j$ and $m_n := m_0$. 
Then the explicit form of the equations \eqref{BE-bottom} and \eqref{BE-top} in the symplectic case is 
\ali{
\prod_{i=1}^\ell\frac{\la^{(i)}_{1}(u_j^{(1)})}{\la^{(i)}_{2}(u_j^{(1)})} &= \prod_{\substack{i=1\\ i\ne j}}^{m_1} \frac{q^{-1} u_j^{(1)}-q\,u^{(1)}_i}{q\,u_j^{(1)}-q^{-1}u^{(1)}_i} \prod_{i=1}^{m_{2}} \frac{q\,u_j^{(1)}-q^{-1}u^{(2)}_i}{u_j^{(1)}-u^{(2)}_i} \,, 
\label{BE-sp-1}
\\
\prod_{i=1}^\ell \frac{\la^{(i)}_{k}(u^{(k)}_j)}{\la^{(i)}_{k+1}(u^{(k)}_j)} &= \prod_{i=1}^{m_{k-1}} \frac{u^{(k)}_j-u^{(k-1)}_i}{q^{-1} u^{(k)}_j-q\,u^{(k-1)}_i}  \prod_{\substack{i=1\\ i\ne j}}^{m_{k}} \frac{q^{-1} u^{(k)}_j-q\,u^{(k)}_i}{q\,u^{(k)}_j-q^{-1}u^{(k)}_i} \prod_{i=1}^{m_{k+1}} \frac{q\,u^{(k)}_j-q^{-1}u^{(k+1)}_i}{u^{(k)}_j-u^{(k+1)}_i} \,,
\label{BE-sp-k}
\\
\prod_{i=1}^\ell \frac{\la^{(i)}_{n-1}(u^{(n-1)}_j)}{\la^{(i)}_{n}(u^{(n-1)}_j)} &= \prod_{i=1}^{m_{n-2}} \frac{u^{(n-1)}_j-u^{(n-2)}_i}{q^{-1} u^{(n-1)}_j-q\,u^{(n-2)}_i} \prod_{\substack{i=1\\ i\ne j}}^{m_{n-1}} \frac{q^{-1} u^{(n-1)}_j-q\,u^{(n-1)}_i}{q\,u^{(n-1)}_j-q^{-1}u^{(n-1)}_i} \prod_{i=1}^{m_n} \frac{q^{2}\,u^{(n-1)}_j-q^{-2}u^{(n)}_i}{u^{(n-1)}_j-u^{(n)}_i} \,,
\label{BE-sp-n-1}
\\
\prod_{i=1}^\ell \frac{\la^{(i)}_{n}(u^{(n)}_j)}{\la^{(i)}_{n+1}(u^{(n)}_j)} &=  \prod_{i=1}^{m_{n-1}} \frac{u^{(n)}_j-u^{(n-1)}_i}{q^{-2}u^{(n)}_j-q^{2} u^{(n-1)}_i} \prod_{\substack{i=1\\i\ne j}}^{m_n} \frac{q^{-2}\,u^{(n)}_j-q^{2}u^{(n)}_i}{q^{2}\,u^{(n)}_j-q^{-2}u^{(n)}_i} 
\label{BE-sp-n}
}
for $2\le k \le n-2$ and all allowed $j$, and weights given by \eqref{la-skew}.

\noindent(ii). In the orthogonal case, the Bethe equations for $k=1,\dots, n-3$ are identical to the symplectic case.
For $k=n-2$ and $n-1$, the equations are replaced by, respectively,
\ali{
\prod_{i=1}^\ell \frac{\la^{(i)}_{n-2}(u^{(n-2)}_j)}{\la^{(i)}_{n-1}(u^{(n-2)}_j)} &= \prod_{i=1}^{m_{n-3}} \frac{u^{(n-2)}_j-u^{(n-3)}_i}{q^{-1} u^{(n-2)}_j-q\,u^{(n-3)}_i} \prod_{\substack{i=1\\i\ne j}}^{m_{n-2}} \frac{q^{-1} u^{(n-2)}_j-q\,u^{(n-2)}_i}{q\,u^{(n-2)}_j-q^{-1}u^{(n-2)}_i} \el & \qu \times \prod_{i=1}^{m_{n-1}} \frac{q\,u^{(n-2)}_j-q^{-1}u^{(n-1)}_i}{u^{(n-2)}_j-u^{(n-1)}_i} \prod_{i=1}^{m_n} \frac{q\,u^{(n-2)}_j-q^{-1}u^{(n)}_i}{u^{(n-2)}_j-u^{(n)}_i} \,, \label{BE-so-n-2}
\\[0.75em]
\prod_{i=1}^\ell \frac{\la^{(i)}_{n-1}(u^{(n-1)}_j)}{\la^{(i)}_{n}(u^{(n-1)}_j) } &= \prod_{i=1}^{m_{n-2}} \frac{u^{(n-1)}_j-u^{(n-2)}_i}{q^{-1} u^{(n-1)}_j-q\,u^{(n-2)}_i} \prod_{\substack{i=1\\i\ne j}}^{m_{n-1}} \frac{q^{-1} u^{(n-1)}_j-q\,u^{(n-1)}_i}{q\,u^{(n-1)}_j-q^{-1}u^{(n-1)}_i}  \,, \label{BE-so-n-1}
}
for all allowed $j$.
For the level-0 Bethe equations, however, the eigenvalue contains four poles at each Bethe root, rather than two. 
Through use of the identity (see Remark 6.6 in \cite{GRW20}),
\[
\frac{\lambda_{n+2}(u)}{\lambda_{n}(u)} = \frac{\lambda_{n+1}(u)}{\lambda_{n-1}(u)},
\]
the resulting expression may be factorised to give the following Bethe equations, for $1\leq j \leq m_n$,
\ali{ \label{BE-so-factor}
&\Bigg( \prod_{i=1}^{m_{n-2}} \frac{q^{-1} u^{(n)}_j-q\,u^{(n-2)}_i}{u^{(n)}_j-u^{(n-2)}_i} \prod_{i=1}^{m_{n-1}} \frac{q\,u^{(n)}_j-q^{-1}u^{(n-1)}_i}{q^{-1}\,u^{(n)}_j-q\,u^{(n-1)}_i} + \prod_{i=1}^{\ell}\frac{ \la^{(i)}_{n}(u^{(n)}_j)}{\la^{(i)}_{n-1}(u^{(n)}_j)}   \Bigg)
\el
&\hspace{3cm} \times \Bigg( \prod_{\substack{i=1 \\ i \neq j}}^{m_n} \frac{q\,u^{(n)}_j-q^{-1}u^{(n)}_i}{q^{-1}\,u^{(n)}_j-q\,u^{(n)}_i}\prod_{i=1}^{m_{n-2}} \frac{q^{-1} u^{(n)}_j-q\,u^{(n-2)}_i}{u^{(n)}_j-u^{(n-2)}_i}  - \prod_{i=1}^{\ell} \frac{ \la^{(i)}_{n+1}(u^{(n)}_j)}{\la^{(i)}_{n-1}(u^{(n)}_j)} \Bigg)  = 0.
}
Observe that setting the first factor to zero is exactly equivalent to \eqref{BE-so-n-1}, the level-$(n-1)$ set of equations, noting that the sign discrepancy is due to the product in \eqref{BE-so-n-1} excluding the $i=j$ index.
This factorisation is due to an automorphism of the Dynkin diagram of type $D_n$, which exchanges the two branching nodes.
This symmetry of Dynkin diagram is broken by our nesting procedure, and so we obtained only a single set of Bethe equations for the level-$(n-1)$ Bethe roots. 
Extending this to the level-0, we set the right-hand factor above to zero to give the level-0 Bethe equations
\ali{
 \prod_{i=1}^\ell \frac{\la^{(i)}_{n-1}(u^{(n)}_j,c_i)}{\la^{(i)}_{n+1}(u^{(n)}_j,c_i)} &=  \prod_{i=1}^{m_{n-2}} \frac{u^{(n)}_j-u^{(n-2)}_i}{q^{-1}u^{(n)}_j-q\,u^{(n-2)}_i}  \prod_{\substack{i=1\\i\ne j}}^{m_n} \frac{q^{-1}u^{(n)}_j-q\,u^{(n)}_i}{q\,u^{(n)}_j-q^{-1}u^{(n)}_i} \label{BE-so-n}
}
for all allowed $j$.

\noindent(iii). Rather than taking \eqref{BE-top} and \eqref{BE-bottom} separately, we could instead attempt to recover the Bethe equations from the condition $\Res{v \to u^{(k)}_j} \Lambda(v;\bm u^{(0\dots n-1)}) = 0$ for all Bethe roots $u^{(k)}_j$, $0 \leq k \leq n-1$. 
As one might expect, this turns out to be directly equivalent to \eqref{BE-bottom} for $1 \leq k \leq n-2$, however, in the $k=n-1$ case we obtain a factorisation identical to \eqref{BE-so-factor},
\aln{
&\Bigg( \prod_{i=1}^{m_{n-2}} \frac{q^{-1} u^{(n-1)}_j-q\,u^{(n-2)}_i}{u^{(n-1)}_j-u^{(n-2)}_i} \prod_{\substack{i=1 \\ i \neq j}}^{m_{n-1}} \frac{q\,u^{(n-1)}_j-q^{-1}u^{(n-1)}_i}{q^{-1}\,u^{(n-1)}_j-q\,u^{(n-1)}_i} - \prod_{i=1}^{\ell}\frac{ \la^{(i)}_{n}(u^{(n-1)}_j)}{\la^{(i)}_{n-1}(u^{(n-1)}_j)} \Bigg)
\el
&\hspace{3cm} \times \Bigg( \prod_{i=1}^{m_n} \frac{q\,u^{(n-1)}_j-q^{-1}u^{(n)}_i}{q^{-1}\,u^{(n-1)}_j-q\,u^{(n)}_i}\prod_{i=1}^{m_{n-2}} \frac{q^{-1} u^{(n-1)}_j-q\,u^{(n-2)}_i}{u^{(n-1)}_j-u^{(n-2)}_i}  + \prod_{i=1}^{\ell} \frac{ \la^{(i)}_{n+1}(u^{(n-1)}_j)}{\la^{(i)}_{n-1}(u^{(n-1)}_j)} \Bigg)  = 0,
}
for $1 \leq j \leq m_{n-1}$.
This again reflects the fact that the symmetry of the Dynkin diagram of type $D_n$ is unbroken at level-0. One might expect that a nesting procedure of the type employed in \cite{MaRa97} for rational closed spin chains and in \cite{Gom18} for rational open spin chains, in which the chain of symmetry subalgebras is $D_n \supset D_{n-1} \supset \dots \supset D_1$, would preserve this symmetry of the Dynkin diagram at all levels of nesting.
\end{rmk}

\begin{rmk}
For $n=2$, the Bethe equations \eqref{BE-so-n-1} and \eqref{BE-so-n} decouple into two sets of Bethe equations for $U_q(\mf{sl}_2)$-symmetric spin chains, and can be solved separately. This is consistent with the isomorphism $\mf{so}_{4} \cong \mf{sl}_2 \oplus \mf{sl}_2$.
Similarly, for $n=3$, the isomorphism  $\mf{so}_{6} \cong \mf{sl}_4$ is borne out in the Bethe equations \eqref{BE-so-n-2}, \eqref{BE-so-n-1} and \eqref{BE-so-n}. 
\end{rmk}

\begin{rmk} \label{R:Bethe-Dynkin}
Let $a_{ij}$ denote the matrix entries of a connected Dynkin diagram of type $C_n$ or $D_n$ and let $I$ denote the set of its nodes. 
Then put $d_1 = \ldots = d_{n} = 1$ except $d_n=2$ for $C_n$. 
Upon substituting $u^{(k)}_j \to q^{\wt d_k} z^{(k)}_j$, where $\wt d_k = \sum_{i=1}^k d_i$ except $\wt d_n = \sum_{i=1}^{n-1} d_i$ for $D_n$, and taking into account \eqref{la-symm} and \eqref{la-skew}, Bethe equations above can be written as
\[
\prod_{i=1}^\ell \frac{\la_k(q^{\wt d_k} z^{(k)}_j)}{\la_{k+1}(q^{\wt d_k} z^{(k)}_j)} = - \prod_{l \in I} \prod_{i=1}^{m_l} \frac{z^{(k)}_j - q^{d_k a_{kl}} z^{(l)}_i}{q^{d_k a_{kl}} z^{(k)}_j- z^{(l)}_i} \,,
\]
for $1\le k \le n$ and all allowed $j$.
\end{rmk}

%%%%%%%%%%%%%%%%%%%%%%%%%%%%%%%%%%%%%%%%%%%%%%%%%%%%%%%%%%%%%%%%%%%%
%
%%%%%%%%%%%%%%%%%%%%%%%%%%%%%%%%%%%%%%%%%%%%%%%%%%%%%%%%%%%%%%%%%%%%

\subsection{A nearest-neighbour interaction Hamiltonian}

In the case where $L(\la^{(i)})_{c_i} \cong \C^n$, so that $L \cong (\C^n)^{\ot \ell}$ and the Lax operators are given simply by the $R$-matrix \eqref{Ru}, a nearest neighbour spin chain Hamiltonian may be extracted from the transfer matrix by taking the logarithmic derivative at a particular value of $v$. Indeed, set $c_i = 1$ for $1 \leq i \leq \ell$ and define an adjusted transfer matrix by 
\[
t(v) := \bigg(\frac{1-v}{q-q^{-1}}\bigg)^\ell \tau(v),
\]
with the property that at $v=1$ it becomes the shift operator,
\[
t(1) = \tr_a P_{a1} \, P_{a2} \cdots P_{a \ell} = P_{\ell-1,\ell} \,P_{\ell-2,\ell-1} \cdots P_{12}.
\]
A nearest-neighbour interaction Hamiltonian is then 
\[
H := \frac{d}{dv} \ln t(v) \Big|_{v=1} = \big( t(1)^{-1} \big)\, t'(1) = \sum_{i=1}^{\ell-1} h_{i,i+1} + h_{\ell,1},
\]
where the interaction between adjacent sites $h \in \End(\C^n \ot \C^n)$ is given by
\[
h := I -\frac{P\,R_{q}}{q-q^{-1}}  + \frac{P\,Q_{q}}{q^{2\kappa}-1} .
\]

%%%%%%%%%%%%%%%%%%%%%%%%%%%%%%%%%%%%%%%%%%%%%%%%%%%%%%%%%%%%%%%%%%%%
%
%%%%%%%%%%%%%%%%%%%%%%%%%%%%%%%%%%%%%%%%%%%%%%%%%%%%%%%%%%%%%%%%%%%%

\subsection{Trace formula for Bethe vectors}

Introduce matrices $f_\theta \in \End(\C^{2n})$ by $f_{+} = q^{-1}e_{n+2,n}-q\,e_{n+1,n-1}$ and $f_{-} = - q\, e_{n+1,n}$. Then define a transposition $\om$ on $\End(\C^{2n})$ by $\om : e_{ij} \mapsto \theta_{ij} q^{\nu_i - \nu_j} e_{\bar\jmath \bar\imath}$ where $\bar\imath = 2n-i+1$ and $\bar\jmath = 2n-j+1$. 
The Theorem below is our second main result.

\begin{thrm} \label{T:tf}
The level-0 Bethe vector can be written as
\ali{ \label{TF}
\Phi^{(0)}_\theta(\bm u^{(0...n-1)}) &= \tr_{\ol{W}} \Bigg[\Bigg( \prod_{k=1}^{n-1} \prod_{i=1}^{m_k} \prod_{j=1}^{m_0} R_{q^{-1},a_i^k a^{0}_j}(u^{(k)}_i,q^{2\theta} u^{(0)}_j) \Bigg) \el & \qq\qq \times \Bigg( \prod_{i=1}^{m_0} T^\om_{a^0_i} (u^{(0)}_i)  \Bigg)
\Bigg( \prod_{k=1}^{n-1} \prod_{i=1}^{m_k} \prod_{j=1}^{m_{0}} R_{a_i^k a^{0}_j}(u^{(k)}_i,q^{-2 \kappa} u^{(0)}_j) \Bigg)
 \el
& \qq\qq\times 
\Bigg( \prod_{k=1}^{n-1} \prod_{i=1}^{m_k}  T_{a_i^k}(u^{(k)}_i) \Bigg)  \Bigg( \prod_{k=2}^{n-1} \prod_{l=1}^{k-1} \prod_{i=1}^{m_k} \prod_{j=m_l}^{1} \!R_{a^k_i a^l_j}(u^{(k)}_i, u^{(l)}_j) \Bigg) 
 \el & \qq\qq  \times (f_{\theta})^{\ot m_0} \ot (e_{21})^{\ot m_1} \ot \cdots \ot (e_{n,n-1})^{\ot m_{n-1}} \Bigg]\cdot \eta \,,
}
where the trace is taken over the space $\ol{W} = W_{\bm a^0} \ot \cdots \ot W_{\bm a^{n-1}}  \cong (\C^{2n})^{\ot (m_0 + \ldots + m_{n-1})}$.
\end{thrm}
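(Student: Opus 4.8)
The plan is to prove \eqref{TF} by unwinding the nested structure of $\Phi^{(0)}_\theta$ in two stages: first the $U_q(\mfL\mfgl_n)$ part carried by $\mr B^{(1)},\dots,\mr B^{(n-1)}$, and then the level-$0$ layer $\mr B^{(0)}$. By Definition \ref{D:BV}, $\Phi^{(1)}_\theta(\bm u^{(1\dots n-1)};\bm u^{(0)})$ is a nested $U_q(\mfL\mfgl_n)$ Bethe vector built from the operators $B^{(k+1)}_{a^k_i}(u;\bm u^{(0\dots k-1)})=\sum_j[A^{(k)}_{a^k_i}(u)]_{1,1+j}\ot e^{(k+1)*}_j$, and the exchange relations \eqref{gln:ab}--\eqref{gln:dd} that it obeys are exactly those underlying the Tarasov--Varchenko construction. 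Hence $\Phi^{(1)}_\theta$ admits the trace formula of \cite{TaVa13} (adapted to our lowest-weight conventions, which one obtains by induction on $n$ as in \cite{BeRa08}), expressing it, over auxiliary $\mfgl_n$-spaces, in terms of the level-$1$ monodromies $A^{(1)}_{a^k_i}(u^{(k)}_i;\bm u^{(0)})$, the reduced cross-level matrices $R^{(k,l)}_{a^k_ia^l_j}(u^{(k)}_i,u^{(l)}_j)$ with $1\le l<k$, and the elementary closing matrices $e^{(1)}_{21},\dots,e^{(1)}_{n,n-1}$; the orderings are dictated by \eqref{Ak}, and each $A^{(1)}_{a^k_i}$ is kept in the factorised form of \eqref{A1} and \eqref{amono}, in terms of $R^{(1,1)}$ and $K^{(1,1)}$.

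The second step is to promote every auxiliary space to $\C^{2n}$ and every reduced matrix to the full $R$-matrix on $\C^{2n}\ot\C^{2n}$. The six-vertex decomposition \eqref{R(u):new}, together with \eqref{RUK1}--\eqref{RUK2}, exhibits $R^{(k,l)}(u,v)$ as a sub-block of $R(u,v)$ and $K^{(1,1)}(u,v)=(R^{(1,1)}_{q^{-1}}(u,v))^{\bar\om_2}$ as a $\bar\om$-transposed sub-block of $R_{q^{-1}}(u,v)$; under \eqref{e=x*e} one has $e_{k+1,k}=x_{11}\ot e^{(1)}_{k+1,k}$, so the closing matrices lift to $(e_{21})^{\ot m_1}\ot\cdots\ot(e_{n,n-1})^{\ot m_{n-1}}$ and at the same time enforce the restriction onto the relevant $\C^2$-blocks. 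Carrying out these substitutions and tracking the spectral shifts --- in particular, the $q^{-2\ka}u^{(0)}_j$ and $q^{2\theta}u^{(0)}_j$ arguments in the cross-level-$0$ $R$-matrices of \eqref{TF} arise from the two $K^{(1,1)}$-layers of \eqref{A1} via \eqref{R(u):new} and \eqref{RUK1} --- turns every $A^{(1)}_a(v)$ into the full $T_a(v)$, every $R^{(k,l)}$ into the full $R$, and the two $K$-layers into the $\prod R_{q^{-1},a^k_ia^0_j}(\cdot,q^{2\theta}\cdot)$ and $\prod R_{a^k_ia^0_j}(\cdot,q^{-2\ka}\cdot)$ factors; the $\mfgl_n$ trace becomes a trace over $(\C^{2n})^{\ot\sum_{k\ge1}m_k}$, the surplus directions being annihilated by the lifted closing matrices. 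Throughout, the Yang--Baxter equation \eqref{YBE} and the relations \eqref{BB}, \eqref{gln:bb} are used to bring the product into the order displayed in \eqref{TF}, consistently with Lemma \ref{L:BV-symm}.

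It remains to incorporate $\mr B^{(0)}_{\bm a^{\tl 0,0}}(\bm u^{(0)})=\prod_i\be_{\ta^0_ia^0_i}(u^{(0)}_i)$. Writing $t_{\bar\jmath,n+k}(u)$ as an entry of the lower-right $n\times n$ block $D(u)$ of $T_a(u)$ via \eqref{e=x*e}, the definition \eqref{beta} shows that $\be_{\ta^0_ia^0_i}(u)$ is, up to the identifications \eqref{e=x*e} and a row-index reflection, the relevant $\C^2$-block of $T^\om_a(u)$. Contracting the $\tl a^0_i$-space of $\be$ against the $K^{(1,1)}_{a\tl a^0_i}$-factors that sit inside each $A^{(1)}_{a^k_i}$ (and the $a^0_i$-space against the $K^{(1,1)}_{aa^0_i}$-factors), and then applying the lifting of the previous paragraph, replaces these contributions by $T^\om_{a^0_i}(u^{(0)}_i)$ inserted between the two $R$-layers, the last open $a^0_i$-index being closed by a fixed matrix. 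That this matrix is $f_\theta$ is forced by the nested vacuum $\eta^{(n-1)}_{-\theta}$: in the symplectic case $\xi^{(m_0)}_+=(e^{(1)}_1)^{\ot 2m_0}$ yields the single term $f_-=-q\,e_{n+1,n}$, whereas in the orthogonal case the recursion $\xi^{(m_0)}_-=e^{(1)}_1\ot\xi^{(m_0-1)}_-\ot e^{(1)}_2-q\,e^{(1)}_2\ot\xi^{(m_0-1)}_-\ot e^{(1)}_1$ yields $f_+=q^{-1}e_{n+2,n}-q\,e_{n+1,n-1}$; one checks, by an induction on $m_0$ using the action of $\mr K^{[m_0]}$ computed in the proof of Lemma \ref{L:a-action}, that this recursion is compatible with the flat tensor power $(f_+)^{\ot m_0}$.

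The step I expect to be the main obstacle is the second one: keeping precise track of which block of which full $R$-matrix each reduced factor $R^{(k,l)}$, $K^{(1,1)}$ embeds into, together with all the spectral-parameter shifts and the $\om/\bar\om$ transpositions, so that the partial traces genuinely collapse to $\tr_{\overline W}$ of exactly the product in \eqref{TF}, with no residual scalar or matrix factors left over. The orthogonal case, in which $f_+$ is a two-term sum matching the entangled vacuum vector $\xi^{(m_0)}_-$, demands the additional compatibility check indicated above.
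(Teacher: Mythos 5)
Your proposal is correct and follows essentially the same route as the paper: the Belliard--Ragoucy trace formula for the level-$1$ ($U_q(\mfL\mfgl_n)$) Bethe vector, conversion of the $\be_{\ta^0_ia^0_i}$ operators into $\om$-transposed monodromy blocks closed by $f_\theta$ (with $f_\theta$ determined by $\xi^{(m_0)}_{-\theta}$), and the passage between the reduced $\C^n$ form and the full $\C^{2n}$ form via the decomposition \eqref{e=x*e} and a conservation argument selecting the surviving blocks. The paper runs that last step in the collapsing direction (partial trace of \eqref{TF} over the $\C^2$ factors, tracking $x_{11}$ versus $x_{21}$), which is exactly the block-bookkeeping you flag as the main obstacle.
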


\begin{proof} 
Recall the trace formula for the Bethe vectors of a $U_q(\mfgl_n)$-symmetric spin chain given in Section 5.2 of \cite{BeRa08}. 
This result implies the following formula for the level-$1$ Bethe vector: 
\aln{
\Phi^{(1)}_\theta(\bm u^{(1...n-1)};\bm u^{(0)} ) &= \tr_{\ol{W}^{(1)}} \Bigg[
\Bigg( \prod_{k=1}^{n-1} \prod_{i=1}^{m_k} A^{(1)}_{a_i^k \bm a^{\tl 0,0}}(u^{(k)}_i;\bm u^{(0)}) \Bigg) \Bigg( \prod_{k=2}^{n-1} \prod_{l=1}^{k-1} \prod_{i=1}^{m_k} \prod_{j=m_l}^{1} \!R^{(1,1)}_{a^k_i a^l_j}(u^{(k)}_i, u^{(l)}_j) \Bigg) \el & \qq\qq\qu \times  (e^{(1)}_{21})^{\ot m_1} \ot \cdots \ot (e^{(1)}_{n,n-1})^{\ot m_{n-1}} \Bigg]\cdot \eta^{(1)}_{-\theta},
}
where the trace is taken over the space $\ol{W}^{(1)} = W^{(1)}_{a^1_1} \ot \cdots \ot W^{(1)}_{a^{n-1}_{m_{n-1}}} \cong (\C^{n})^{\ot (m_1 + \ldots + m_{n-1})}$.
From \eqref{A1}, this is equal to
\aln{
\Phi^{(1)}_\theta(\bm u^{(1...n-1)};\bm u^{(0)} ) &= \tr_{\ol{W}^{(1)}} \Bigg[
\Bigg( \prod_{k=1}^{n-1} \prod_{i=1}^{m_k} \prod_{j=m_0}^{1} K^{(1,1)}_{a_i^k \tl a^{0}_j}(u^{(k)}_i,q^{2\theta} u^{(0)}_j) \Bigg)
\Bigg( \prod_{k=1}^{n-1} \prod_{i=1}^{m_k} \prod_{j=1}^{m_0} K^{(1,1)}_{a_i^k a^{0}_j}(u^{(k)}_i, u^{(0)}_j) \Bigg)
\el 
& \qq\qq\; \times \Bigg( \prod_{k=1}^{n-1} \prod_{i=1}^{m_k}  A^{(1)}_{a_i^k}(u^{(k)}_i) \Bigg) \Bigg( \prod_{k=2}^{n-1} \prod_{l=1}^{k-1} \prod_{i=1}^{m_k} \prod_{j=m_l}^{1} \!R^{(1,1)}_{a^k_i a^l_j}(u^{(k)}_i, u^{(l)}_j) \Bigg) 
\el 
& \qq\qq\; \times  (e^{(1)}_{21})^{\ot m_1} \ot \cdots \ot (e^{(1)}_{n,n-1})^{\ot m_{n-1}} \Bigg]\cdot \eta^{(1)}_{-\theta} .
}
We now introduce the level-0 creation operators in order to arrive at an expression for the level-0 Bethe vector, as given in Definition~\ref{D:BV},
\aln{
\Phi^{(0)}_\theta(\bm u^{(0...n-1)}) &= \tr_{\ol{W}^{(1)}} \Bigg[  \Bigg( \prod_{i=1}^{m_0} \be_{\ta^0_i a^0_i} (u^{(0)}_i)  \Bigg)
\Bigg( \prod_{k=1}^{n-1} \prod_{i=1}^{m_k} \prod_{j=m_0}^{1} K^{(1,1)}_{a_i^k \tl a^{0}_j}(u^{(k)}_i,q^{2\theta} u^{(0)}_j) \Bigg)
 \el
 & \qq\qq\; \times \Bigg( \prod_{k=1}^{n-1} \prod_{i=1}^{m_k} \prod_{j=1}^{m_{0}} K^{(1,1)}_{a_i^k  a^{0}_j}(u^{(k)}_i, u^{(0)}_j) \Bigg) \Bigg( \prod_{k=1}^{n-1} \prod_{i=1}^{m_k}  A^{(1)}_{a_i^k}(u^{(k)}_i) \Bigg) 
 \el
& \qq\qq\; \times \Bigg( \prod_{k=2}^{n-1} \prod_{l=1}^{k-1} \prod_{i=1}^{m_k} \prod_{j=m_l}^{1} \!R^{(1,1)}_{a^k_i a^l_j}(u^{(k)}_i, u^{(l)}_j) \Bigg)   (e^{(1)}_{21})^{\ot m_1} \ot \cdots \ot (e^{(1)}_{n,n-1})^{\ot m_{n-1}} \Bigg]\cdot \eta^{(1)}_{-\theta} .
}
The next step is to rewrite the above expression in terms of the matrix $B(u)$, cf. \eqref{Lu:new}, rather than the creation operator $\beta(u)$. 
Consider the following in expression, in which a matrix operator $X$ acts non-trivially on the space $\ta^0$ and trivially on the space $a^0$, and vice versa for the matrix operator $Y$, and both operators act non-trivially on any number of other spaces,
\[
\beta_{\ta^0 a^0}(u)\,  X_{\ta^0} \, Y_{a^0} \cdot (e_k^{(1)})_{\ta^0} \ot (e_l^{(1)})_{a^0} = \sum_{ij}  (e^{(1)*}_j)_{\ta^0} \ot (e^{(1)*}_i)_{a^0} \ot  q^{-i} b_{\bi j}(u) \cdot X_{\ta^0} \, Y_{a^0} \cdot (e_k^{(1)})_{\ta^0} \ot (e_l^{(1)})_{a^0} .
\]
Contracting matrices gives
\aln{
\sum_{ij} q^{-i}\,b_{\bi j}(u)\,[X]_{jk} [Y]_{il} &=\sum_{ij} q^{-\bar\jmath} \big[ B^\om(u)\big]_{\bj i} [X]_{jk} [Y]_{il}
\\
&=\sum_{ij} q^{-\bar k} \big[ B^\om(u)\big]_{\bj i} [X^\om]_{\bar k \bj} [Y]_{il}
\\
&=\sum_{ij} q^{-\bar k} [X^{ \omega}]_{\bar k \bj} \big[ B^{ \omega}(u)\big]_{\bj i}  [Y]_{il}
\\
&= q^{-\bar k} [X^{ \omega} \, B^{ \omega}(u) \, Y]_{\bar k l} = q^{-\bar k} \tr \big[X^{ \omega} \, B^{ \omega}(u) \, Y e^{(1)}_{l \bar k} \big] .
}
We have thus arrived at the identity
\[
\beta_{\ta^0 a^0}(u)\,  X_{\ta^0} \, Y_{a^0} \cdot (e_k^{(1)})_{\ta^0} \ot (e_l^{(1)})_{a^0} = q^{-\bar k} \tr \big[X^{ \omega} \, B^{ \omega}(u) \, Y e^{(1)}_{l \bar k} \big] .
\]
Now recall \eqref{xi}. Hence we need to consider the cases when $(k,l)=(1,1),(1,2),(2,1)$, or equivalently $(l,\bar k)=(1,n),(2,n),(1,n-1)$. Bearing this in mind we define matrices $f^{(1)}_\theta \in \End(\C^n)$ by $f^{(1)}_- = -q\,e^{(1)}_{1,n}$ and $f^{(1)}_+ = q^{-1} e^{(1)}_{2,n} - q\,e^{(1)}_{1,n-1}$.
This allows us to write the level-$0$ Bethe vector as follows:
\ali{
&\Phi^{(0)}_\theta(\bm u^{(0...n-1)}) 
\el
&\qq =  \tr_{\ol{W}} \Bigg[\Bigg( \prod_{k=1}^{n-1} \prod_{i=1}^{m_k} \prod_{j=m_0}^{1} K^{(1,1)}_{a_i^k a^{0}_j}(u^{(k)}_i,q^{2\theta} u^{(0)}_j) \Bigg)^{\omega_{\bm a^0}}  \Bigg( \prod_{i=1}^{m_0} B^\om_{a^0_i} (u^{(0)}_i)  \Bigg) \Bigg( \prod_{k=1}^{n-1} \prod_{i=1}^{m_k} \prod_{j=1}^{m_{0}} K^{(1,1)}_{a_i^k a^{0}_j}(u^{(k)}_i, u^{(0)}_j) \Bigg)
\el
& \qq \qq \qq \times 
\Bigg( \prod_{k=1}^{n-1} \prod_{i=1}^{m_k}  A^{(1)}_{a_i^k}(u^{(k)}_i) \Bigg) \Bigg( \prod_{k=2}^{n-1} \prod_{l=1}^{k-1} \prod_{i=1}^{m_k} \prod_{j=m_l}^{1} \!R^{(1,1)}_{a^k_i a^l_j}(u^{(k)}_i, u^{(l)}_j) \Bigg)
 \el & \qq \qq \qq \times (\theta q^{\theta-n}f^{(1)}_\theta)^{\ot m_0} \ot (e^{(1)}_{21})^{\ot m_1} \ot \cdots \ot (e^{(1)}_{n,n-1})^{\ot m_{n-1}} \Bigg]\cdot \eta
 \el
&\qq =  \tr_{\ol{W}} \Bigg[\Bigg( \prod_{k=1}^{n-1} \prod_{i=1}^{m_k} \prod_{j=1}^{m_0} R^{(1,1)}_{q^{-1},a_i^k a^{0}_j}(u^{(k)}_i,q^{2\theta} u^{(0)}_j) \Bigg) \Bigg( \prod_{i=1}^{m_0} \theta q^{\theta-n} B^\om_{a^0_i} (u^{(0)}_i)  \Bigg) \Bigg( \prod_{k=1}^{n-1} \prod_{i=1}^{m_k} \prod_{j=1}^{m_{0}} K^{(1,1)}_{a_i^k a^{0}_j}(u^{(k)}_i, u^{(0)}_j) \Bigg)
\el
& \qq \qq \qq \times 
\Bigg( \prod_{k=1}^{n-1} \prod_{i=1}^{m_k}  A^{(1)}_{a_i^k}(u^{(k)}_i) \Bigg) \Bigg( \prod_{k=2}^{n-1} \prod_{l=1}^{k-1} \prod_{i=1}^{m_k} \prod_{j=m_l}^{1} \!R^{(1,1)}_{a^k_i a^l_j}(u^{(k)}_i, u^{(l)}_j) \Bigg)
 \el & \qq \qq \qq \times (f^{(1)}_\theta)^{\ot m_0} \ot (e^{(1)}_{21})^{\ot m_1} \ot \cdots \ot (e^{(1)}_{n,n-1})^{\ot m_{n-1}} \Bigg]\cdot \eta. \nn\\[-2.3em] \label{TFX-red}
}

It remains to show that the form of the Bethe vector given in \eqref{TF} reduces to to above form, by considering the decomposition $\C^{2n} \cong \C^2 \ot \C^n$ as in \eqref{e=x*e}, and tracing out all the $\C^2$ spaces.
Making explicit this decomposition, the formula \eqref{TF} becomes
\ali{
\label{TFX}
\Phi^{(0)}_\theta(\bm u^{(0...n-1)}) & = \tr_{\ol{W}} \Bigg[\Bigg( \prod_{k=1}^{n-1} \prod_{i=1}^{m_k} \prod_{j=1}^{m_0} R_{q^{-1},a_i^k a^{0}_j}(u^{(k)}_i,q^{2\theta} u^{(0)}_j) \Bigg) 
 \el
& \qq\qq \times 
\Bigg( \prod_{i=1}^{m_0} T^\om_{a^0_i} (u^{(0)}_i)  \Bigg) \Bigg( \prod_{k=1}^{n-1} \prod_{i=1}^{m_k} \prod_{j=1}^{m_{0}} R_{a_i^k a^{0}_j}(u^{(k)}_i,q^{-2 \kappa} u^{(0)}_j) \Bigg)  
\el
& \qq\qq \times  \Bigg( \prod_{k=1}^{n-1} \prod_{i=1}^{m_k}  T_{a_i^k}(u^{(k)}_i) \Bigg) \Bigg( \prod_{k=2}^{n-1} \prod_{l=1}^{k-1} \prod_{i=1}^{m_k} \prod_{j=m_l}^{1} \!R_{a^k_i a^l_j}(u^{(k)}_i, u^{(l)}_j) \Bigg) 
 \el & \qq\qq \times (x_{21} \ot f^{(1)}_{\theta})^{\ot m_0} \ot (x_{11} \ot e^{(1)}_{21})^{\ot m_1} \ot \cdots \ot (x_{11} \ot e^{(1)}_{n,n-1})^{\ot m_{n-1}} \Bigg]\cdot \eta.
}
Recall \eqref{R(u):new} and note that
\gan{
R(u,v) \;x_{11} \ot x_{11} = R^{(1,1)}(u,v)\, x_{11} \ot x_{11} ,
\\
R(u,v) \;x_{11} \ot x_{21} = U^{(1,1)}(u,v) \,  x_{21} \ot x_{11} + K^{(1,1)}(u,q^{2\kappa}v) \, x_{11} \ot x_{21} .
}
Next, recall \eqref{nu} and note that
\[
\nu_{n+j}-\nu_i = j - \theta' - (-n+i-1+\theta') = j-i+n-\theta
\]
for all $1 \leq i,j \leq n$.
This implies the following relationship between the transposition $\omega$ on $\End(\C^{2n})$ and its counterpart on $\End(\C^n)$:
\aln{
\big[ T^\om(u)\big]_{i,n+j} &=  \theta q^{i-j-n+\theta} t_{n-j+1, 2n-i+1}(u) =\theta q^{\theta-n} \big[ B^\om(u)\big]_{ij} .
}
Hence, the action of $T(u)$ and $T^\om(u)$ on $x_{11}$ and $x_{21}$ takes the form
\gan{
T(u) \, x_{11} =  A(u)\,x_{11}+ C(u)\,x_{21} ,
\\
T^\om(u) \, x_{21} = \theta q^{\theta-n} B^\om(u)\,x_{11} + A^\om(u)\,x_{21} .
}
The identities above imply that the numbers of $x_{11}$'s and $x_{21}$'s inside the trace in \eqref{TFX} are conserved individually under the action $R$-matrices.
Therefore, the only possibility for the partial trace over the $\C^2$ spaces to be nonzero is if the action of the $T_{a^0_i}^\om(u^{(0)}_i)$ maps $(x_{21})_{a^0_i}$ to $(x_{11})_{a^0_i}$. 
That is, each  $T_{a^0_i}^\om(u^{(0)}_i)$ acts as $ q^{\theta-n} \theta B_{a^0_i}^\om(u^{(0)}_i)$.
Since each $(x_{21})_{a^0_i}$ must be acted on by $T_{a^0_i}^\om(u^{(0)}_i)$, the $R$-matrices to the right of the $T_{a^0_i}^\om(u^{(0)}_i)$'s must not permute the $(x_{21})_{a^0_j}$ with the $(x_{11})_{a^k_i}$ for $k \geq 1$. 
That is, the $R_{a_i^k a^{0}_j}(u^{(k)}_i,q^{-2 \kappa} u^{(0)}_j)$ each act as $K^{(1,1)}_{a_i^k a^{0}_j}(u^{(k)}_i,u^{(0)}_j)$ in order for the trace to be non-zero.
Finally, all other $R$-matrices in \eqref{TFX} act on suitable pairs $x_{11} \ot x_{11}$ only, and may simply be replaced with $R^{(1,1)}(u,v)$, for appropriate $u,v$.
This proves that taking the partial trace over the $\C^2$ spaces in \eqref{TFX} we arrive at \eqref{TFX-red}, as required.
\end{proof}

%%% Bethe Vector Examples

Below we provide two examples of level-0 Bethe vectors obtained using \eqref{TFX}. We will assume that $m_i =0$ for $0\le i \le n-1$ if not stated otherwise. We also set $p_k := \del_{k,n-1}\,q^{-2}\,(q-q^{-1})$.

\begin{exam}[Symplectic case]
For $n\ge 1$ and $m_0=m\ge 1$ we have
\[
\Phi^{(0)}_{-}(u^{(0)}_1,\dots,u^{(0)}_m) = q^{-m}t_{n,n+1}(u^{(0)}_1) \cdots t_{n,n+1}(u^{(0)}_m) \cdot \eta .
\]
For $n\ge 2$ and $m_0=m_k=1$ with $1\le k\le n-1$ we have
\[
\Phi^{(0)}_{-}(u^{(0)}_1,u^{(k)}_1) = q^{-1}\Bigg[\, t_{n,n+1}(u^{(0)}_1)\, t_{k,k+1}(u^{(k)}_1) + \frac{p_k\,u^{(0)}_1}{u^{(k)}_1-u^{(0)}_1} \Big(t_{n-1,n+1}(u^{(0)}_1)+t_{n,n+2}(u^{(0)}_1)\Big)\,t_{nn}(u^{(k)}_1)\,\Bigg] \cdot \eta .
\]
For $n\ge 2$ and $m_0=2$, $m_k=1$ with $1\le k\le n-1$ we have
\aln{
& \Phi^{(0)}_{-}(u^{(0)}_1,u^{(0)}_2,u^{(k)}_1) = q^{-2}\Bigg[\, t_{n,n+1}(u^{(0)}_1)\,t_{n,n+1}(u^{(0)}_2)\, t_{k,k+1}(u^{(k)}_1) \\ 
& \qu + p_k \Bigg( \frac{u^{(0)}_2}{u^{(k)}_1-u^{(0)}_2}\, t_{n,n+1}(u^{(0)}_1) \Bigg(t_{n-1,n+1}(u^{(0)}_2) + \frac{q u^{(k)}_1-q^{-1}u^{(0)}_1}{u^{(k)}_1-u^{(0)}_1}\, t_{n,n+2}(u^{(0)}_2) \Bigg) \\
& \qq\qq + \frac{u^{(0)}_1}{u^{(k)}_1-u^{(0)}_1} \Bigg(\frac{q u^{(k)}_1-q^{-1}u^{(0)}_2}{u^{(k)}_1-u^{(0)}_2}\,t_{n-1,n+1}(u^{(0)}_1) + \frac{q^2 u^{(k)}_1-q^{-2}u^{(0)}_2}{u^{(k)}_1-u^{(0)}_2}\,t_{n,n+2}(u^{(0)}_1) \Bigg) t_{n,n+1}(u^{(0)}_2) \Bigg)\\ & \hspace{13cm} \times t_{k+1,k+1}(u^{(k)}_1) \,\Bigg] \cdot \eta .
}
\end{exam}

\begin{exam}[Orthogonal case]
For $n\ge 1$ and $m_0=m\ge 1$ we have
\[
\Phi^{(0)}_{+}(u^{(0)}_1,\dots,u^{(0)}_m) = \prod_{k=1}^m \Big(q^{-2}t_{n-1,n+1}(u^{(0)}_k) -t_{n,n+2}(u^{(0)}_k)\Big) \cdot \eta .
\]
For $n\ge 2$ and $m_0=m_k=1$ with $1\le k\le n-1$ we have
\aln{
\Phi^{(0)}_{+}(u^{(0)}_1,u^{(k)}_1) &= \Bigg[\,\frac{q^{\del_{k,n-1}} u^{(k)}_1 -q^{-\del_{k,n-1}} u^{(0)}_1}{q^2(u^{(k)}_1-u^{(0)}_1)} \left(q^{-2} t_{n-1,n+1}(u^{(0)}_1)-t_{n,n+2}(u^{(0)}_1)\right)\, t_{k,k+1}(u^{(k)}_1) \\ 
& \hspace{2.8cm} + \frac{p_{k+1}\,u^{(0)}_1}{u^{(k)}_1-u^{(0)}_1} \left(q^{-2} t_{n-2,n+1}(u^{(0)}_1) - q\,t_{n,n+3}(u^{(0)}_1) \right) t_{k+1,k+1}(u^{(k)}_1) \,\Bigg] \cdot \eta .
}
For $n\ge 4$ and $m_0=2$, $m_k=1$ with $1\le k\le n-3$ we have
\aln{
\Phi^{(0)}_{+}(u^{(0)}_1,u^{(0)}_2,u^{(k)}_1) &= \left(q^{-2} t_{n-1,n+1}(u^{(0)}_1)-t_{n,n+2}(u^{(0)}_1)\right) \\ & \qu\times \left(q^{-2} t_{n-1,n+1}(u^{(0)}_2)-t_{n,n+2}(u^{(0)}_2)\right) \, t_{k,k+1}(u^{(k)}_1)\cdot \eta .
}
The $k=n-2$ and $k=n-1$ cases have long tails, hence we have not written them out explicitly. 
\end{exam}

%%%%%%%%%%%%%%%%%%%%%%%%%%%%%%%%%%%%%%%%%%%%%%%%%%%%%%%%%%%%%%%%%%
% Appendix
%%%%%%%%%%%%%%%%%%%%%%%%%%%%%%%%%%%%%%%%%%%%%%%%%%%%%%%%%%%%%%%%%%

\appendix

%%%%%%%%%%%%%%%%%%%%%%%%%%%%%%%%%%%%%%%%%%%%%%%%%%%%%%%%%%%%%%%%%%%%

\nc{\qlim}{\underset{\hbar\to 0}{\longrightarrow}}

\section{The semi-classical limit}

In order to retrieve the results of \cite{Res85} and \cite{DVK87}, we investigate the semi-classical $q \rightarrow 1$ limit, or equivalently the $\hbar \rightarrow 0$ limit. 
The limit must be taken in a particular way, as the spectral parameters have a hidden $q$ dependence. 
Setting $u = e^{2x\hbar}$, $v = e^{2y\hbar}$ and $q = e^{\hbar}$ and expanding in powers of $\hbar$, we recover the Zamolodchikov $R$-matrix \cite{ZaZa78,KuSk82},
\[
R_q \qlim I,  \qq  Q_q \qlim Q := \sum_{i,j=1}^{2n} e_{ij} \ot e_{\bi \bj}, \qq
R(u,v) \qlim R(x-y) := I - \frac{P}{x - y}   - \frac{Q}{\ka - (x - y)} \,.
\]
The reduced $R$-matrices become the Yang $R$-matrices,
\aln{
R^{(k,l)}(u,v) \qlim R^{(k,l)}(x-y) := I^{(k,l)} - \frac{P^{(k,l)}}{x - y} .
}
The eigenvalues given in Theorem \ref{T:gln} in the limit become
\aln{
\La^{(1)}(y;\bm x^{(1\dots n-1)}; \bm x^{(0)}) & = \prod_{i=1}^{m_1} \frac{y-x^{(1)}_i+1}{y-x^{(1)}_i} \prod_{i=1}^\ell\la^{(i)}_{1}(y) \el
& \qu + \sum_{k=2}^{n-2} \prod_{i=1}^{m_{k-1}} \frac{y-x^{(k-1)}_i-1}{y-x^{(k-1)}_i} \prod_{i=1}^{m_k} \frac{y-x^{(k)}_i+1}{y-x^{(k)}_i} \prod_{i=1}^\ell\la^{(i)}_{k}(y)  \el 
& \qu + \prod_{i=1}^{m_{n-2}} \frac{y-x^{(n-2)}_i-1}{y-x^{(n-2)}_i} \prod_{i=1}^{m_{n-1}} \frac{y-x^{(n-1)}_i+1}{y-x^{(n-1)}_i} \prod_{i=1}^{m_0} \frac{y-x^{(0)}_i+\theta'}{y-x^{(0)}_i} \prod_{i=1}^\ell\la^{(i)}_{n-1}(y) \el
& \qu +  \prod_{i=1}^{m_{n-1}} \frac{y-x^{(n-1)}_i-1}{y-x^{(n-1)}_i} \prod_{i=1}^{m_0} \frac{y-x^{(0)}_i+2-\theta'}{y-x^{(0)}_i} \prod_{i=1}^\ell\la^{(i)}_{n}(y) \\[-.75em] 
\intertext{and}
\wt{\La}^{(1)}(y;\bm x^{(1\dots n-1)};\bm x^{(0)}) & = \prod_{i=1}^{m_1} \frac{y-x^{(1)}_i - \ka }{y-x^{(1)}_i-\ka+1} \prod_{i=1}^\ell\la^{(i)}_{2n}(y) \el
& \qu + \sum_{k=2}^{n-2} \prod_{i=1}^{m_{k-1}} \frac{y-x^{(k-1)}_i-\ka+k}{y-x^{(k-1)}_i-\ka+k-1} \prod_{i=1}^{m_k} \frac{y-x^{(k)}_i-\ka+k-1}{y-x^{(k)}_i-\ka+k} \prod_{i=1}^\ell\la^{(i)}_{2n-k+1}(y)  
\el 
& \qu + \prod_{i=1}^{m_{n-2}} \frac{y-x^{(n-2)}_i+\theta-1}{y-x^{(n-2)}_i+\theta-2} \prod_{i=1}^{m_{n-1}} \frac{y-x^{(n-1)}_i+\theta-1}{y-x^{(n-1)}_i+\theta-2}  \prod_{i=1}^{m_0} \frac{y-x^{(0)}_i-\theta'}{y-x^{(0)}_i} \prod_{i=1}^\ell\la^{(i)}_{n+2}(y)  \el 
& \qu + \prod_{i=1}^{m_{n-1}} \frac{y-x^{(n-1)}_i+\theta}{y-x^{(n-1)}_i+\theta-1} \prod_{i=1}^{m_0} \frac{y-x^{(0)}_i-2 +\theta'}{y-x^{(0)}_i} \prod_{i=1}^\ell\la^{(i)}_{n+1}(y)
}
where the rational weights are given by 
\ali{
\la^{(i)}_j(v) \qlim \la^{(i)}_j(y) &:= \begin{cases} 
\dfrac{y- c_i - s_i}{ y - c_i }  &\text{if} \qu j=1, \\[.75em]
1 &\text{if}\qu 1< j < 2n, \\[.25em] 
\dfrac{y-b_i-\ka+1}{y-b_i-\ka+1-s_i} & \text{if}\qu j= 2n  \end{cases}  \label{la-symm-rat}   
\intertext{in the symmetric case, i.e.\ when $\mfg_{2n}=\mfso_{2n}$, and by}
\la^{(i)}_j(v) \qlim \la^{(i)}_j(y) & := \begin{cases} 
\dfrac{y-b_i-1}{y-b_i}  &\text{if}\qu 1\le j \le s_i, \\[.75em]
1 &\text{if}\qu s_i< j < 2n-s_i+1, \\[.25em] 
\dfrac{y-b_i-\ka+s_i}{y-b_i-\ka+s_i-1} & \text{if}\qu 2n-s_i+1\le j\le 2n  \end{cases} \label{la-skew-rat} 
}
in the skewsymmetric case, i.e.\ when $\mfg_{2n}=\mfsp_{2n}$; here $b_i = \frac{1}{2\hbar}\log c_i\in \C$ are the inhomogeneities.

The Bethe equations may be obtained in the same way. Denoting $x^{(n)}_j := x^{(0)}_j$ and $m_n := m_0$ their explicit form is, in the symplectic case,
\aln{
\prod_{i=1}^\ell\frac{\la^{(i)}_{1}(x_j^{(1)})}{\la^{(i)}_{2}(x_j^{(1)})} &= \prod_{\substack{i=1\\ i\ne j}}^{m_1} \frac{x_j^{(1)}-x^{(1)}_i-1}{x_j^{(1)}-x^{(1)}_i+1} \prod_{i=1}^{m_{2}} \frac{x_j^{(1)}-x^{(2)}_i+1}{x_j^{(1)}-x^{(2)}_i} \,, 
\\
\prod_{i=1}^\ell \frac{\la^{(i)}_{k}(x^{(k)}_j)}{\la^{(i)}_{k+1}(x^{(k)}_j)} &= \prod_{i=1}^{m_{k-1}} \frac{x^{(k)}_j-x^{(k-1)}_i}{x^{(k)}_j-x^{(k-1)}_i-1}  \prod_{\substack{i=1\\ i\ne j}}^{m_{k}} \frac{x^{(k)}_j-x^{(k)}_i-1}{x^{(k)}_j-x^{(k)}_i+1} \prod_{i=1}^{m_{k+1}} \frac{x^{(k)}_j-x^{(k+1)}_i+1}{x^{(k)}_j-x^{(k+1)}_i} \,,
\\
\prod_{i=1}^\ell \frac{\la^{(i)}_{n-1}(x^{(n-1)}_j)}{\la^{(i)}_{n}(x^{(n-1)}_j)} &= \prod_{i=1}^{m_{n-2}} \frac{x^{(n-1)}_j-x^{(n-2)}_i}{x^{(n-1)}_j-x^{(n-2)}_i-1} \prod_{\substack{i=1\\ i\ne j}}^{m_{n-1}} \frac{ x^{(n-1)}_j-x^{(n-1)}_i-1}{x^{(n-1)}_j-x^{(n-1)}_i+1} \prod_{i=1}^{m_n} \frac{x^{(n-1)}_j-x^{(n)}_i+2}{x^{(n-1)}_j-x^{(n)}_i} \,,
\\
\prod_{i=1}^\ell \frac{\la^{(i)}_{n}(x^{(n)}_j)}{\la^{(i)}_{n+1}(x^{(n)}_j)} &=  \prod_{i=1}^{m_{n-1}} \frac{x^{(n)}_j-x^{(n-1)}_i}{x^{(n)}_j-x^{(n-1)}_i-2} \prod_{\substack{i=1\\i\ne j}}^{m_n} \frac{x^{(n)}_j-x^{(n)}_i-2}{x^{(n)}_j-x^{(n)}_i+2} 
\intertext{for $2\le k \le n-2$ and all allowed $j$, and weights given by \eqref{la-skew-rat}.
In the orthogonal case, the Bethe equations for $k = 1, \ldots , n-3$ are identical to the symplectic case; for $k = n-2, n-1, n$ they are, respectively,
}
\prod_{i=1}^\ell \frac{\la^{(i)}_{n-2}(x^{(n-2)}_j)}{\la^{(i)}_{n-1}(x^{(n-2)}_j)} &= \prod_{i=1}^{m_{n-3}} \frac{x^{(n-2)}_j-x^{(n-3)}_i}{ x^{(n-2)}_j-x^{(n-3)}_i-1} \prod_{\substack{i=1\\i\ne j}}^{m_{n-2}} \frac{x^{(n-2)}_j-x^{(n-2)}_i-1}{x^{(n-2)}_j-x^{(n-2)}_i+1} \el & \qu \times \prod_{i=1}^{m_{n-1}} \frac{x^{(n-2)}_j-x^{(n-1)}_i+1}{x^{(n-2)}_j-x^{(n-1)}_i} \prod_{i=1}^{m_n} \frac{x^{(n-2)}_j-x^{(n)}_i+1}{x^{(n-2)}_j-x^{(n)}_i} \,,
\\[0.5em]
\prod_{i=1}^\ell \frac{\la^{(i)}_{n-1}(x^{(n-1)}_j)}{\la^{(i)}_{n}(x^{(n-1)}_j) } &= \prod_{i=1}^{m_{n-2}} \frac{x^{(n-1)}_j-x^{(n-2)}_i}{x^{(n-1)}_j-x^{(n-2)}_i-1} \prod_{\substack{i=1\\i\ne j}}^{m_{n-1}} \frac{ x^{(n-1)}_j-x^{(n-1)}_i-1}{x^{(n-1)}_j-x^{(n-1)}_i+1}  \,, 
\\[0.5em]
 \prod_{i=1}^\ell \frac{\la^{(i)}_{n-1}(x^{(n)}_j)}{\la^{(i)}_{n+1}(x^{(n)}_j)} &=  \prod_{i=1}^{m_{n-2}} \frac{x^{(n)}_j-x^{(n-2)}_i}{x^{(n)}_j-x^{(n-2)}_i-1}  \prod_{\substack{i=1\\i\ne j}}^{m_n} \frac{x^{(n)}_j-x^{(n)}_i-1}{x^{(n)}_j-x^{(n)}_i+1} 
}
for all allowed $j$, and weights given by \eqref{la-symm-rat}. Substituting $x^{(k)}_j \to w^{(k)}_j - \wt d_k$ with $\wt d_k$ and assuming restrictions on $n$ as in Remark \ref{R:Bethe-Dynkin}, the Bethe equations for both symplectic and orthogonal cases take the form
\[
\prod_{i=1}^\ell \frac{\la^{(i)}_k(w^{(k)}_j - \wt d_k)}{\la^{(i)}_{k+1}(w^{(k)}_j - \wt d_k)} = - \prod_{l \in I} \prod_{i \in 1}^{m_l} \frac{w^{(k)}_j-w^{(l)}_i - \frac12d_k a_{kl}}{w^{(k)}_j-w^{(l)}_i + \frac12d_k a_{kl}} 
\]
for $1\le k \le n$ and all allowed $j$.

Finally, the trace formula for level-0 Bethe vector \eqref{TFX} takes the form
\aln{
\Phi^{(0)}(\bm x^{(0...n-1)}) & = \tr_{\ol{W}} \Bigg[\Bigg( \prod_{k=1}^{n-1} \prod_{i=1}^{m_k} \prod_{j=1}^{m_0} R_{a_i^k a^{0}_j}(x^{(k)}_i-x^{(0)}_j-\theta) \Bigg) \\
& \qq\qq \times \Bigg( \prod_{i=1}^{m_0} T^{t}_{a^0_i} (x^{(0)}_i)  \Bigg)
\Bigg( \prod_{k=1}^{n-1} \prod_{i=1}^{m_k} \prod_{j=1}^{m_{0}} R_{a_i^k a^{0}_j}(x^{(k)}_i - x^{(0)}_j + \kappa ) \Bigg)
 \el
& \qq\qq \times 
\Bigg( \prod_{k=1}^{n-1} \prod_{i=1}^{m_k}  T_{a_i^k}(x^{(k)}_i) \Bigg)  \Bigg( \prod_{k=2}^{n-1} \prod_{l=1}^{k-1} \prod_{i=1}^{m_k} \prod_{j=m_l}^{1} \!R_{a^k_i a^l_j}(x^{(k)}_i - x^{(l)}_j) \Bigg) 
 \el & \qq\qq \times  \, (f_{\theta})^{\ot m_0} \ot (e_{21})^{\ot m_1} \ot \cdots \ot (e_{n,n-1})^{\ot m_{n-1}} \Bigg]\cdot \eta \,,
}
where the trace is taken over the space $\ol{W} = W_{a^0_1} \ot \cdots \ot W_{a^{n-1}_{m_{n-1}}} \cong (\C^{2n})^{\ot (m_0 + \ldots + m_{n-1}})$ and $f_\theta \in \End(\C^{2n})$ is defined for orthogonal and symplectic cases respectively by $f_1 = e_{n+1,n-1}-e_{n+2,n}$ and $f_{-1} = e_{n+1,n}$, and $T_{a}(x)$'s are defined via the rational fusion procedure analogous to that in Section \ref{sec:reps} (see \cite{IMO12} and Section 3.1 in \cite{GeRe20}) and $t$ is the transposition defined by $t: e_{ij} \mapsto e_{\bar\jmath\hspace{0.3mm} \bar\imath}$.

%%%%%%%%%%%%%%%%%%%%%%%%%%%%%%%%%%%%%%%%%%%%%%%%%%%%%%%%%%%%%%%%%%
% Bibliography
%%%%%%%%%%%%%%%%%%%%%%%%%%%%%%%%%%%%%%%%%%%%%%%%%%%%%%%%%%%%%%%%%%

\end{document}